\newcommand{\R}{\mathbb{R}}
\newcommand{\Z}{\mathbb{Z}}
\newcommand{\I}{\mathcal{I}}
\newcommand{\x}{\mathcal{X}}
\newtheoremstyle{bsmm}%
{10pt}{10pt}{\itshape}%
{}{\scshape}{.}{ }%
{\thmname{#1}\thmnumber{ (#2)}\thmnote{ (#3)}}
\theoremstyle{bsmm}
\newtheorem{thm}{Theorem}[section]
\newtheorem{prop}[thm]{Proposition}
\theoremstyle{remark}
\newtheorem{remark}[thm]{Remark}
\newtheorem{defi}[thm]{Definition}
\title{Itinerary synchronization in a network of nearly identical PWL systems coupled with unidirectional links and ring topology}
\author{A.~Anzo-Hern\'andez$^a$, E.~Campos-Cant\'on$^{b*}$  and Matthew Nicol$^c$ }
\date{}
\begin{document}
\maketitle
\begin{center}
$^a${\small 
C\'atedras CONACYT - Benem\'erita Universidad Aut\'onoma de Puebla,\\
\textsc{Facultad de Ciencias F\'isico-Matem\'aticas,\\
Benem\'erita Universidad Aut\'onoma de Puebla,\\
Avenida San Claudio y 18 Sur, Colonia San Manuel, 72570.\\
Puebla, Puebla, M\'exico.} {\it andres.anzo@hotmail.com}
\vskip 2ex}
$^{b}${\small \textsc{Divisi\'on de Matem\'aticas Aplicadas,}\\
 {Instituto Potosino de Investigaci\'on Cient\'{\i}fica y Tecnol\'ogica A.C.}\\
 \textsc{Camino a la Presa San Jos\'e 2055 col. Lomas 4a Secci\'on, 78216,
 San Luis Potos\'{\i}, SLP, M\'{e}xico.} {\it eric.campos@ipicyt.edu.mx,}{\normalsize $^*$}Corresponding author. 
\vskip 2ex}
$^c${\small \textsc{Mathematics Department, University of Houston},\\
{Houston, Texas},\\
\textsc{77204-3008, USA.
{\it nicol@math.uh.edu.}
\vskip 2ex}}
\end{center}
\setcounter{page}{1}

\begin{abstract}
In this paper the collective dynamics of $N$-coupled piecewise linear (PWL) systems with different number of scrolls and coupled in a  master-slave sequence configuration is studied, \textit{i.e.} a ring connection with unidirectional links. Itinerary synchronization is proposed to detect synchrony behavior with systems that can present generalized multistability. Itinerary synchronization consists in analyzing the symbolic dynamics of the systems by assigning different numbers to the regions where the scrolls are generated. It is shown that in certain parameter regimes  if the inner connection between nodes is given by means of considering all the state variables of the system, then itinerary synchronization occurs and the coordinate motion is determined by the node with the smallest number of scrolls. Thus the collective behavior in all the nodes of the network is determined by the node with least scrolls in its attractor leading to generalized multistability phenomena which can be detected via itinerary synchronization.  Results about attacks to the network are also presented, for example,  when the PWL system is attacked by removing a given link  to produce an open ring configuration.  Depending on the inner connection properties, the nodes present multistability or preservation in the number of scrolls of the attractors. 
\end{abstract}

\noindent {\bf keywords:}
Piecewise linear systems; chaos; dynamical networks, multiscroll attractor.

\section{Introduction}

Piecewise linear (PWL) systems are used to construct simple  chaotic oscillators capable of generating various multiscroll attractors in the phase space. These systems contain a linear part plus a nonlinear element characterized by a switching law. One of the most studied PWL system is the so called Chua's circuit, whose nonlinear part (also named the Chua diode) generates two scroll attractors \cite{Chua1992,Kennedy1993}. Inspired by  the Chua circuit, a great number of PWL systems have been produced  via various switching systems~\cite{Liu2012}. A review and summary of  different approaches to generate multiscroll attractors can be found in~\cite{YALCIN2002,Lue2006, Campos2016} and references therein.

Synchronization phenomena in a pair of coupled PWL systems has also attracted  attention in the context of nonlinear dynamical systems theory and its applications \cite{Gamez-Guzman2009, Munoz-Pacheco2012}. 

In general, we say that a set of dynamical systems achieve synchronization if they adjust their motion to approach a  common trajectory (in some sense) by means of  interactions~\cite{Boccaletti2006b}.

 One way  to study synchronization in a pair of PWL systems is to couple them in a master-slave configuration \cite{Chua1992, Fuhong2012}. In \cite{Fuhong2015} the dynamics mechanism of the projective synchronization of Chua circuits with different scrolls is investigated. 
In~\cite{Jimenez2013},  a master-slave system composed of PWL systems is considered in which the slave system displays more scrolls in its attractor than the master system. The main result is that the slave system synchronizes with the master system by reducing its number of attractor scrolls, while the master preserves its number of scrolls. A consequence is the emergence of multistability phenomena. For instance, if the number of scrolls presented by the master system is less than the number of scrolls presented by the slave system, then the slave system can oscillate in multiple basins of attraction depending on its initial condition. Conversely, when  the system of~\cite{Jimenez2013} is adjusted so that the master system displays more scrolls than the slave system when uncoupled then the slave system increases its number of attractor  scrolls when coupled.

 We study  a system composed of an ensemble of master-slave systems coupled in a  ring configuration network; {\it i.e.}, a dynamical network where each node is a PWL-system with different number of scrolls in the attractors and connected in a ring topology with directional links. In order to address this problem, we introduce three concepts: 1)  scroll-degree, which is defined as the number of scrolls of an attractor in a given node; 2)  a network of nearly identical nodes, \textit{i.e.}, a dynamical network composed of PWL systems with perhaps different scroll degree but similar form and 3) itinerary synchronization based on symbolic dynamics. A PWL system is defined by means of a partition  of the space where linear systems act, so this natural partition is useful for analyzing  synchronization between dynamical systems by using symbolic dynamics. Of course itinerary synchronization does not imply  complete synchronization, where trajectories converge to a single one. In this paper we  study the emergence of itinerary synchronization, multistability and the preservation of the scroll number of a network of nearly identical nodes.

In real networks, there are common types of network attacks. There are attacks where information is monitored, this attack is known as passive attack; other attacks alter the information with intent to corrupt it or destroy the network itself, this kind of attack is known as active attack. In this work,  link attacks are introduced to the ring topology in order to study the effect of topology changes in the collective dynamics of the network. Such attacks modify the topology by deleting a single link, transforming the structure to an open chain of coupled systems which we call an open ring. We have formulated two possible scenarios after the link attack: a) the first node in the chain has the largest scroll-degree or, b) it has the smallest one. In both scenarios we assume that the inner coupling matrix is the identity matrix \textit{i.e.} the coupling between any pair of nodes is throughout all its state variables.

We believe that a study of  such collective phenomena in a network of PWL systems contributes to a better understanding of the collective dynamics of a network with non identical nodes. The general case  is difficult to tackle. For example, in~\cite{Sun2009}, Sun \textit{et.al.} studied  the case in which nodes are nearly identical in the sense that each node has a slight parametric mismatch. The authors proposed an extension of the master stability functions in this type of dynamical network. On the other hand, based on stability analysis with multiple Lyapunov functions, Zhao \textit{et.al.} established synchronization criteria for certain  networks of non-identical nodes with the same equilibria point \cite{Zhao2011}. The authors proposed stability conditions in terms of inequalities involving matrix spectra which are,  computationally speaking, difficult to solve. To the best of our knowledge, multistability and scroll-degree preservation have not been studied  in the context of dynamical networks.
This is a problem with potential applications.  For example, a dynamical network of PWL system increases the complexity in its behavior, so this can be used to generate a secure communications system, and due to the simplicity, it could be easily implemented in an electronic device.

We have organized this paper as follows: In section 2 we introduce some mathematical preliminaries. In section 3 the dynamics of N-coupled PWL systems in a ring topology network is analyzed. In section 4 some examples about itinerary synchronization are studied and different forms of couplings are also considered. Moreover, in this section we study the dynamics of a network when some link attacks occur.  Finally, in section 5 we consider open problems.

\section{Mathematical Preliminaries}

 \subsection{Piecewise linear dynamical systems}

Let $T:X\to X$, with $X\subset \R^n$ and $n \in \Z^{+}$, be a piecewise linear dynamical system whose dynamics is given by a family of sub-systems  of the form 

\begin{equation}\label{eq:family_eq}
\dot{\x} = A_{\tau}\x + B_{\tau},
\end{equation}

\noindent where  $\x= (x_{1},\ldots,x_{n})^T \in \R^n$ is the state vector, $A_{\tau}  = \{ \alpha^{\tau}_{ij} \} \in \R^{n \times n}$, with $\alpha^{\tau}_{ij} \in \R^{+}$, and  $B_{\tau} = (\beta_{\tau 1},\ldots,\beta_{\tau n})^{T}  \in \R^n$  are the linear operators and constant real vectors of the $\tau$th-subsystems, respectively. The index $\tau \in \I=\{1,\ldots,\eta\}$ is given by a rule that switches the activation of a sub-system in order to determine the dynamics of the PWL system. 

Note that the index $\tau$ generates the symbolic dynamics of activation of the subsystems, so depending on the initial condition the itinerary is given. The selection of the index $\tau$ can be given according to a predefined itinerary and controlling by time; or  assuming that $\tau$ takes its value according to the state variable $\chi$ and a finite partition of the state-space $\mathcal{P}=\{P_1,\ldots,P_r\}$, with $r\in \Z^+$. 
\begin{defi} 
Let $X$ be a subset of $\R^n$ and $\mathcal{P}=\{P_1,\ldots,P_r\}$ ($r>1$) be a finite partition of $X$, that is, $X=\bigcup_{1\leq i\leq r} P_i$, and $P_i\cap P_j= \emptyset$ for $i\neq j$. Each element of the set $\mathcal{P}$ is called an atom.
\end{defi}

An easy way to generate a partition $\mathcal{P}$ is given by considering a vector $\mathbf{v} \in \R^{n}$ (with $\mathbf{v} \neq 0$) and a set of scalars $\delta_1 < \delta_2 < \cdots < \delta_{\eta} $ such that each $P_{i} = \{ \x \in  \R^{n}:  \delta_{i} \leq \mathbf{v}^{T} \x  < \delta_{i+1} \}$. The hyperplanes  $\mathbf{v}^{T} \x = \delta_{i}$ are called the switching surfaces, with $i=1,\dots,\eta$. Without loss of generality, we assume that the hyperplanes $\mathbf{v}^{T} \x = \delta_{i}$ (for $i = 1,2,\ldots,\eta$) are defined with $\mathbf{v} = (1,0,\ldots,0)^{T} \in \R^{n}$. 

In this paper we consider a piece-wise linear system $(T,\mathcal{P})$, such that its restriction to each atom $P_i$ is $T(\x_i^*)=0$ for one $\x_i^* \in P_i$  with $i \in \I$, \textit{i.e.}, $\x_i^*=-A_{\tau}^{-1}B_{\tau}$. We assume that the switching signal depends on the state variable and is defined as follows:
\begin{defi} 
 Let $\I = \{1,2,\ldots,\eta \}$ be an index set that labels each element of the family of the sub-systems \eqref{eq:family_eq}. A pure-state dependent and piecewise constant function $\kappa: \R^{n} \rightarrow \I =\{1,\ldots,\eta\}$ of the form
 
 \begin{equation}\label{eq:switching_law}
\kappa(\x) =  \left\{
\begin{array}{lll}
      1, & \text{if}         & \x \in P_1;\\
      2, & \text{if}        & \x \in P_2;\\
       \vdots     &             &   \vdots     \\
      \eta, & \text{if}    & \x \in P_\eta;
\end{array} 
\right.
\end{equation} 

\noindent is called a switching signal. Furthermore, if $\kappa(\x) = \tau_{i} \in \I$ is the value of the switching signal during the time interval $t \in [t_{i},t_{i+1})$, then $\mathcal{S}(\x_0)=\{\tau_0,\tau_1,\ldots, \tau_m, \ldots\}$ stands for the itinerary generated by $\kappa(\x_0)$ at $\x_0$ and, $\mathcal{S}(i,\x_0)$ is the element $\tau_i \in \mathcal{S}(\x_0)$ that occurs at time $t_i$, this defines a  set $\Delta_t=\{t_0,t_1,\ldots,t_m,\ldots\}$.  
\end{defi}

Note that $\tau$'s changes only when the orbit $\phi(t,\chi_0)$  goes from one atom $P_i$ to another $P_j$, $i\neq j$.  
\begin{defi} 
	A  $\eta$-PWL system is composed of two sets: $\textbf{A} = \{ A_{1}, \ldots, A_{\eta} \}$ and $\textbf{B} = \{ B_{1},B_{2},\ldots,B_{\eta}  \}$, with $A_{\tau} = \{ \alpha^{\tau}_{ij} \} \in \R^{n \times n}$ ($\alpha^{\tau}_{ij} \in \R$) and $B_{\tau} = (\beta_{\tau 1},\ldots,\beta_{\tau n})^{T}  \in \R^{n}$; and a pure-feedback switching signal $\kappa: \R^{n} \rightarrow \I = \{1,2,\ldots,\eta \}$ so that:

\begin{equation}\label{eq:PWL}
\dot{\x} =   \left\{
\begin{array}{lll}
      A_{1}\x + B_{1},          & if  & \kappa(\x) =  1; \\
      A_{2}\x + B_{2},          & if  & \kappa(\x) =  2; \\
      \hspace{25pt} \vdots  &     &  \hspace{25pt}  \vdots     \\
      A_{\eta}\x + B_{\eta},  & if   & \kappa(\x) = \eta.  
\end{array} 
\right.
\end{equation}
\end{defi}
 
We can rewrite \eqref{eq:PWL} in a more compact form as:
\begin{equation}\label{eq:PWL2}
  \dot{\x} = A_{\kappa(\x)} \x + B_{\kappa (\x)}. 
\end{equation}
\begin{defi} 
Two  $\eta_1$-PWL and $\eta_2$-PWL systems are called quasi-sym\-met\-ri\-cal if they are governed by the same linear operator  $A=A_i$ for all $i$  but  $\eta_1 \neq \eta_2$.
\end{defi}
In particular, we assume that the dimension $n=3$ and that the eigenspectra of linear operators $A_{\tau}\in \R^{3\times 3}$ have the following features: a) at least one eigenvalue is a real number; and 2) at least two eigenvalues are complex numbers. There is an approach to generate dynamical systems based on these linear dissipative systems (sometimes called  an unstable  dissipative system (UDS) \cite{Ontanon-Garcia2014}). In this paper we use a particular type of UDS called \textit{Type I}:
\begin{defi} 
	A subsystem $(A_\tau, B_\tau)$ of the system \eqref{eq:PWL2} in $\R^3$ is said to be an UDS of  \textit{Type I} if the eigenvalues of the linear operator $A_{\tau}$ denoted by $\lambda_{i}$ satisfy: $\sum^{3}_{i=1}  \lambda_{i}  < 0$; such that  $\lambda_{1}$  is a negative real eigenvalue and; the other two $\lambda_{2}$ and $\lambda_{3}$  are complex conjugate eigenvalues with positive real part. The system is an UDS of \textit{Type II} if one $\lambda_{i}$ is a positive real eigenvalue and; the other two $\lambda_i$ are complex conjugate eigenvalues with negative real part.
\end{defi}

So, to each value $\kappa(\x) = \tau \in \I$, is associated an atom $P_{\tau} \subset \R^{n}$, containing an equilibrium point $\chi_{\tau}^{*} = -A_{}^{-1}B_{\tau}$ which has a stable manifold $E^{s} = Span \{ \bar{v}_{j} \in \R^{3}: \alpha_{j} < 0\}$ and an unstable manifold $E^{u}  = Span \{ \bar{v}_{j} \in \R^{3}: \alpha_{j} > 0\}$, with $\bar{v}_{j}$ an eigenvector of the linear operator $A$ and $\lambda_{j}= \alpha_{j} + i\beta_{j}$ its corresponding eigenvalue; \textit{i.e.} it is a saddle equilibrium point. We are interested in bounded flows which are generated by quasi-symmetrical $\eta$-PWL systems such that for any initial condition $\x_{0} \in \R^{3}$, the orbit $\phi(t,\chi_{0})$ of the $\eta$-PWL system \eqref{eq:PWL2} is trapped in a one-spiral trajectory in the atom $P_{\tau}$ called a scroll. The orbit escapes from one atom to other due to the unstable manifold in each atom.  In this context, the  system $\eta$-PWL  \eqref{eq:PWL2} can display various multi-scroll attractors as a result of a combination of several unstable one-spiral trajectories, while  the switching between regions is governed by the function \eqref{eq:switching_law}. 

\begin{defi} 
	 The scroll-degree of a $\eta$-PWL system \eqref{eq:PWL2} based on UDS {\it Type I} is the maximum number of scrolls that the PWL system can display in the attractor.
\end{defi}	 
	
In this work we consider the same linear operator $A$, so $A_{\tau}=A$ for all $\tau$.

\begin{figure}[ht] 
\centering
\includegraphics[width=12cm,height=6cm]{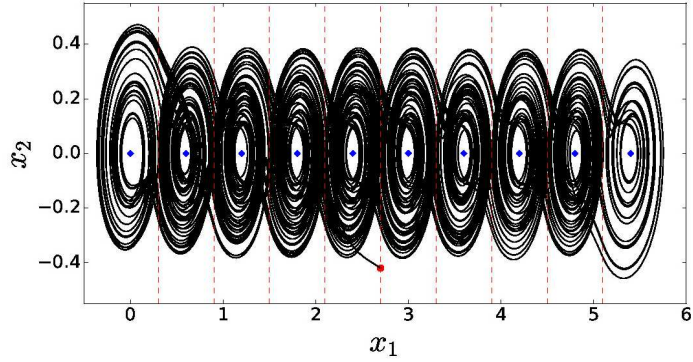}
\caption{Projection of the attractor generated by the quasi-symmetrical 10-PWL(S) system onto the ($x_{1},x_{2}$) plane. The dashed lines mark the division between the atoms.}
\label{fig:Fig1}
\end{figure}

{\bf Example 1:}  
In order to illustrate the generation of multiscroll attractors  using \eqref{eq:PWL2}, we consider a quasi-symmetrical 10-PWL system defined in $\R^3$ with state vector $\x = (x_{1},x_{2},x_{3})^{T}$ and  linear operator defined as follows

\begin{equation}\label{eq:example1}
A = \left( \begin{array}{ccc}
0 & 1 & 0 \\
0 & 0 & 1 \\
-\alpha_{31} & -\alpha_{32} & -\alpha_{33} \end{array} \right); 
\end{equation}
\noindent where $\alpha_{31} = 1.5, \alpha_{32} = 1$ and $\alpha_{33} = 1$;
the set of constants vectors 
$$\textbf{B} = \{ B_{1} = (0,0,0)^{T}, B_{2} = (0,0,0.9)^{T}, B_{3} = (0,0,1.8)^{T}, B_{4} = (0,0,2.7)^{T},$$
$$ B_{5} = (0,0,3.6)^{T}, B_{6} = (0,0,4.5)^{T} ,  B_{7} = (0,0,5.4)^{T}, B_{8} = (0,0,6.3)^{T}, $$
$$B_{9} = (0,0,7.2)^{T}, B_{10} = (0,0,8.1)^{T}\};$$ and the partition: 
\begin{equation}\label{eq:switching_law_10}
\begin{array}{r}
\mathcal{P} \; = \; \{\;	  P_{1} = \{\x \in \mathbf{R}^{3}: x_{1} < 0.3 \},
      P_{2} = \{\x \in \{\x \in \mathbf{R}^{3}: 0.3 \leq x_{1} < 0.9 \},\\
      P_{3} = \{\x \in \mathbf{R}^{3}: 0.9 \leq x_{1} < 1.5 \},
      P_{4} = \{\x \in \mathbf{R}^{3}: 1.5 \leq x_{1} < 2.1 \},\\
      P_{5} = \{\x \in \mathbf{R}^{3}: 2.1 \leq x_{1} < 2.7 \},
      P_{6} = \{\x \in \mathbf{R}^{3}: 2.7 \leq x_{1} < 3.3 \},\\
      P_{7} =\{\x \in \mathbf{R}^{3}: 3.3 \leq x_{1} < 3.9 \},
      P_{8} = \{\x \in \mathbf{R}^{3}: 3.9 \leq x_{1} < 4.5\}\\
      P_{9} = \{\x \in \mathbf{R}^{3}: 4.5 \leq x_{1} < 5.1\}, 
      P_{10} = \{\x \in \mathbf{R}^{3}: x_{1} \geq 5.1\} \}
 \end{array}
 \end{equation}
 
The eigenvalues of $A$ are $\lambda_{1} = -1.20$ and $\lambda_{2,3} = 0.10  \pm 1.11 \textit{i}$. By Definition 2.4, the system is an UDS of \textit{Type I}. The equilibrium points for this system are at $\chi^{*}_{1} =  (0,0,0)^T, $ $\chi^{*}_{2} =  (0.6,0,0)^T$, $ \chi^{*}_{3} =  (1.2,0,0)^T$, $\chi^{*}_{4} =  (1.8,0,0)^T $, $\chi^{*}_{5} =  (2.4,0,0)^T$, $\chi^{*}_{6} =  (3,0,0)^T$, $ \chi^{*}_{7} =  (3.6,0,0)^T$, $ \chi^{*}_{8} =  (4.2,0,0)^T$, $\chi^{*}_{9} =  (4.8,0,0)^T  $ and $\chi^{*}_{10} =  (5.4,0,0)^T$. Figure \eqref{fig:Fig1} depicts the projection of the attractor generated by the quasi-symmetrical 10-PWL(S) system onto the ($x_{1},x_{2}$) plane with initial condition $\chi_{0} = (4.81, -0.38, 0.09)^{T}$. We solved this system~\eqref{eq:PWL}  numerically Runge-Kutta  with $2000$ time iterations and step-size $h=0.01$.

\subsection{Two coupled PWL systems}

Consider two quasi-symmetrical $\eta$-PWL systems defined by \eqref{eq:PWL2}, i.e they have different  scroll-degrees.  We couple in a Master-Slave configuration given as follows.
\begin{equation} \label{eq:MS}
\begin{split}
\dot{\x}_{m} & =  A\x_{m} +  B_{\kappa_{m} ( \x_{m}) },   \\ 
\dot{\x}_{s}  & =  A\x_{s}   +  B_{\kappa_{s} ( \x_{s} ) }  + c\Gamma ( \x_{m} - \x_{s}),
\end{split}
\end{equation}
where  $\x_{m} = (x^{m}_{1},x^{m}_{2},x^{m}_{3})^{T}$ and $\x_{s} = (x^{s}_{1},x^{s}_{2},x^{s}_{3})^{T}$ are the state vectors of the master and slave systems, respectively. $\kappa_{i}: \R^{3} \rightarrow \I_{i} = \{1,2,\ldots,\eta_{i}\}$, with $i=m,s$ and  $\eta_{m} \neq \eta_{s}$, is the pure-state-feedback signal of the master system ($i=m$) and slave system ($i=s$). The itineraries generated by $\tau$'s  of the master and slave systems are $\mathcal{S}_m(\x_{m0})=\{\tau_0,\tau_1,\ldots\}$ and $\mathcal{S}_s(\x_{s0})=\{\tau'_0,\tau'_1,\ldots\}$, respectively. The corresponding time sets are given by $\Delta_{tm}=\{t_0,t_1,\ldots \}$ and $\Delta_{ts}=\{t'_0,t'_1,\ldots \}$. The constant matrix $\Gamma  = diag\{r_1,r_2,r_3 \} \in \R^{3 \times 3}$ is the inner linking matrix where $r_{l} = 1$ (for $l = 1,2,3$) if both master and slave systems are linked through their $l$-th state variable, and $r_{l} = 0$ otherwise, and $0< c \in \R$ is the coupling strength.

There are several definitions of synchronization~\cite{Pikovsky2001, Jun2011}, for instance, complete synchronization is given as follows: 
\begin{defi} 
	The master-slave system \eqref{eq:PWL2} is said to achieve complete synchronization if 
	\begin{equation}
	 \lim_{t\to \infty}|| \phi(t,\x_{m0}) - \phi(t,\x_{s0}) || \to 0.
	\end{equation}
	for  $\x_{m0} \neq  \x_{s0}$.
\end{defi}

The symbol $|| \cdot||$ denotes the Euclidean distance in $\R^{3}$.  This mode of synchronization is very strong.  There are weaker and more
generalized notions of  synchronization~\cite{Rulkov1995}.

It has been reported in \cite{Jimenez2013} that in  the type of configuration given by \eqref{eq:MS} the  master system determines the scroll-degree in the slave system. In particular, if  $\eta_{m} < \eta_{s}$, then the master-slave system achieves complete synchronization and different basins of attraction appear equal to $\eta_{s} - \eta_{m}+1$. The trajectories of the slave system  depend on their initial condition. That is, the master-slave configuration results in multiple basins of attraction for the slave. Such a phenomena is called multistability~\cite{Campos-Canton2012}. On the other hand, if $\eta_{m} > \eta_{s}$, then the slave system increases its scroll-degree till it matches the master's scroll-degree.

In order to illustrate the dynamical behavior of the master-slave system, consider two quasi-symmetrical $\eta$-PWL system with linear operator $A$ and a set of constant vectors $\mathbf{B} = \{B_{1}, B_{2}, \ldots, B_{10} \}$ defined in Example 1 (Eq. \eqref{eq:example1}). 
\begin{figure}[ht] 
\centering
\includegraphics[width=11.5cm,height=10cm]{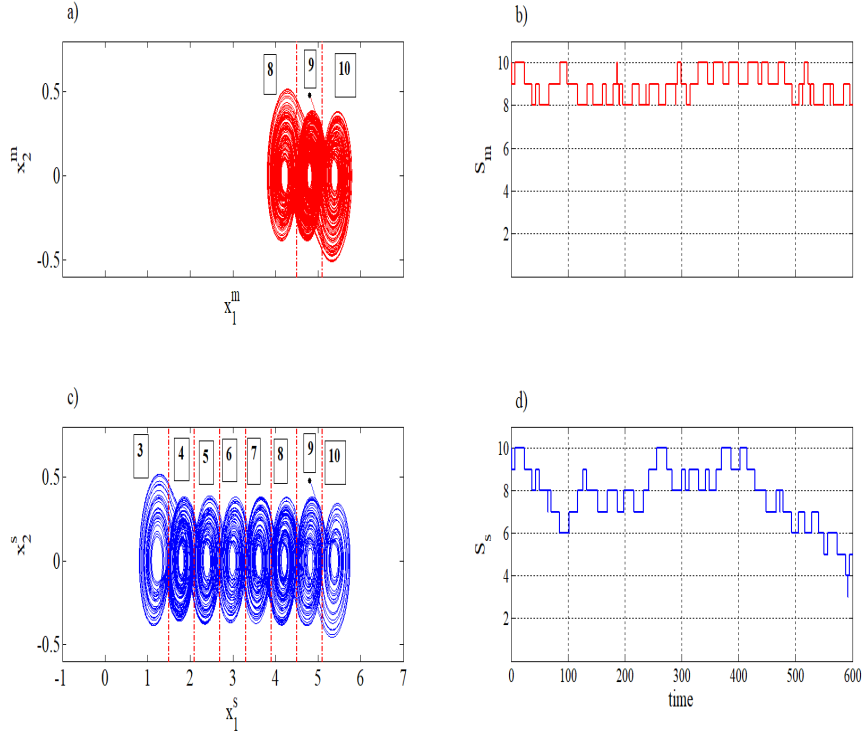}
\caption{\small a) Projection of the master system onto the plane $(x_1^m,x_2^m)$ with initial condition $\chi_{mo}=(4.8,0.48,-0.29)^T$; b)  The master itinerary $S_m(\chi_{m0})$; c) Projection of the slave system onto the plane $(x_1^s,x_2^s)$ with initial condition $\chi_{so}=(4.8,0.48,-0.29)^T$, for coupling strength $c=0$; d) The slave itinerary $S_s(\chi_{s0})$.}
\label{fig:ItiSalveC0}
\end{figure}

{\bf Example 2:} 
As a first example of a coupled pair of multiscroll chaotic systems, suppose that the master's scroll-degree is $\eta_{m} = 3$  and  the slave's scroll-degree is $\eta_{s} = 8$, and both are connected with a coupling strength $c$ and an inner coupling matrix given by $\Gamma = \{0,1,0\}$. The pure-state-feedback signal for the master system $\kappa_{m}: \R^{3} \rightarrow \I_{m}= \{8, 9, 10 \}$ is 
\begin{equation}\label{eq:switching_law_3}
\kappa_{m}(\x) =  \left\{
\begin{array}{lll}
      10, & \text{ if}   & \x \in P_{10} = \{\x \in \R^{3}: x_{1} \geq 5.1\}; \\
      9, & \text{ if}   & \x \in P_{9} = \{\x \in \R^{3}: 4.5 \leq x_{1} < 5.1\};  \\
      8, & \text{ if}   & \x \in P_{8} = \{\x \in \R^{3}:  x_{1} < 4.5\}.   \\
\end{array} 
\right.
\end{equation}

And for the slave system the function $\kappa_{s}: \R^{3} \rightarrow \I_{s} = \{3, 4, \ldots, 10 \}$ is
\begin{equation}\label{eq:switching_law_8}
\kappa_{s}(\x) =  \left\{
\begin{array}{lll}
      10, & \text{ if}          & \x \in P_{10} = \{\x \in \R^{3}: x_{1} \geq 5.1\}; \\
      9, & \text{ if}          & \x \in P_{9} = \{\x \in \R^{3}: 4.5 \leq x_{1} < 5.1\};  \\
      8, & \text{ if}          & \x \in P_{8} = \{\x \in \R^{3}: 3.9 \leq x_{1} < 4.5\};   \\
      7, & \text{ if}          & \x \in P_{7} = \{\x \in \R^{3}: 3.3 \leq x_{1} < 3.9 \};   \\
      6, & \text{ if}          & \x \in P_{6} = \{\x \in \R^{3}: 2.7 \leq x_{1} < 3.3 \};  \\
      5, & \text{ if}          & \x \in P_{5} = \{\x \in \R^{3}: 2.1 \leq x_{1} < 2.7 \};  \\
      4, & \text{ if}          & \x \in P_{4} = \{\x \in \R^{3}: 1.5 \leq x_{1} < 2.1 \};  \\
      3, & \text{ if}          & \x \in P_{3} = \{\x \in \R^{3}:  x_{1} < 1.5 \}.  \\
\end{array} 
\right.
\end{equation}

Using Runge-Kutta  with $200000$ time iterations and a step-size of $h=0.01$, we solve numerically the system \eqref{eq:MS}. 
 Firstly, we analyze the particular case when the coupling strength is $c=0$, {\it i.e.}, the systems are not coupled. Projections of the attractors onto the planes $(x_1^m,x_2^m)$ and $(x_1^s,x_2^s)$ are given in Figures \ref{fig:ItiSalveC0} a) and c), in both cases the master and slave systems start at the same initial condition $\chi_{m0}=\chi_{s0}=(4.8,0.48,-0.29)^T$. This initial condition is indicated with a black dot in figures. However master and slave systems oscillate in a different way due to they have different scroll degree $\eta_m=3$ and $\eta_s=8$. The elements of the index sets $\I_m=\{8,9,10\}$ and  $\I_s=\{3,4,5,6,7,8,9,10\}$ for the master and slave systems, respectively, are indicated on the top of Figures \ref{fig:ItiSalveC0} a) and c).

 Figures \ref{fig:ItiSalveC0} b) and d) show the itineraries $S_m(\chi_{m0})$ and $S_s(\chi_{s0})$ of the master and slave systems, respectively. Note that they are different because the systems have different scroll-degree, even though they start at the same initial condition. The itineraries $S_m(\chi_{m0})$ and $S_s(\chi_{s0})$ are given by the dynamics of the master and slave systems and correspond to the activation of the systems in different atoms of the partitions, {\it i.e.}, the itinerary $S_m(\chi_{m0})$  generated by $\kappa_m:\R^3 \to \I_m$ only takes three values $\{8,9,10\}$, meanwhile the itinerary  $S_s(\chi_{s0})$ generated by $\kappa_s:\R^3 \to \I_s$ takes eight values   $\{3,4,5,6,7,8,9,10\}$.

\begin{figure}[ht] 
\centering
\includegraphics[width=11.5cm,height=10cm]{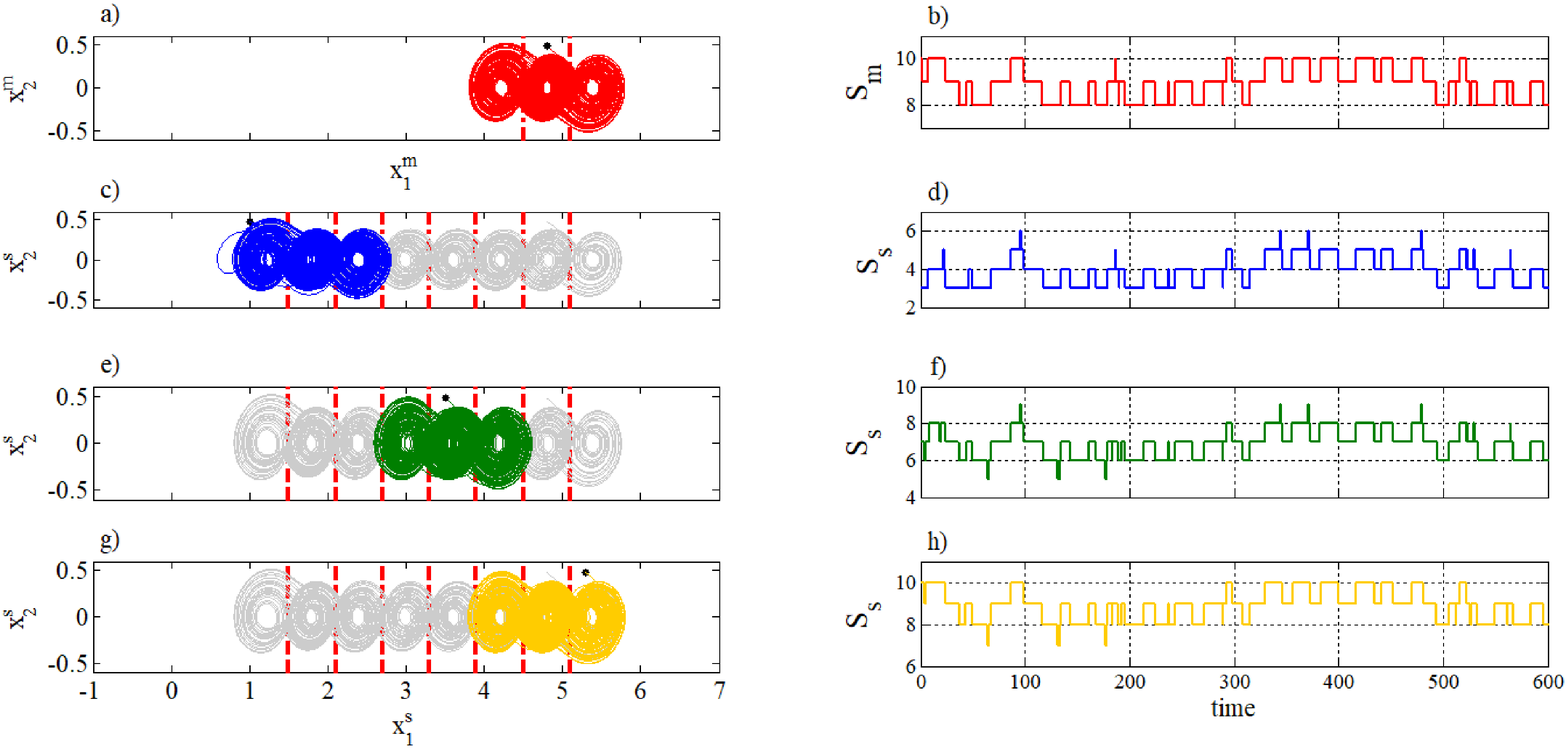}
\caption{\small Projections of the  master-slave system onto the ($x_{1},x_{2}$) plane, for $\eta_{m}=3$, $\eta_{s}=8$, $\Gamma = \{0,1,0\}$ and coupling strength $c = 10$. a) Master system with initial condition $\chi_{mo} = (4.8, 0.48, -0.29)^{T}$ and b) its itinerary $\mathcal{S}_m(\x_{mo})$. c) Slave system with $\chi_{so1} = (1.01, 0.48, -0.29)^{T}$, and d) its itinerary $\mathcal{S}_s(\x_{so1})$.  e) Slave system with  $\chi_{so2} = (3.5, 0.48, -0.29)^{T}$ and f) its itinerary $\mathcal{S}_s(\x_{so2})$. g) Slave system with $\chi_{so3} = (5.3, 0.48, -0.29)^{T}$ and h) its itinerary $\mathcal{S}_s(\x_{so3})$.}
\label{fig:DiffIniCons}
\end{figure}

Now, we set the coupling strength $c=10$ and used different initial conditions for the slave system. 

Figure \ref{fig:DiffIniCons} shows the projections of master-slave system given by \eqref{eq:MS} onto the planes ($x_{1}^m,x_{2}^m$) and ($x_{1}^s,x_{2}^s$).
 Different initial conditions are used for the slave system located at distinct atoms. For the master system the initial condition is $\chi_{mo} = (4.8, 0.48, -0.29)^{T}$, see Figure \ref{fig:DiffIniCons}~a). In specific we use different initial conditions for the slave system $\chi_{so1} = (1.01, 0.48, -0.29)^{T}$ for Figure \ref{fig:DiffIniCons}~c), $\chi_{so2} = (3.5, 0.48, -0.29)^{T}$ for Figure \ref{fig:DiffIniCons}~e) and $\chi_{so3} = (5.3, 0.48, -0.29)^{T}$ for Figure \ref{fig:DiffIniCons}~g).   It is worth to observe that the slave system reduces its scroll-degree to three and, depending on the initial condition, it evolves between distinct basins of attraction, \textit{i.e.}, multistability appears.  We plot in gray  the trajectory of the slave system when it is not coupled with the master system in order to compare it when it is coupled, see Figure \ref{fig:DiffIniCons} c), e) and g).


 Notice that the  itinerary of the master system  $S_m(\chi_{m0})$ generated by $\kappa_m:\R^3 \to \I_m=\{8,9,10\}$ remains, however the  itinerary of the slave system $S_m(\chi_{m0})$ generated by $\kappa_s:\R^3 \to \I_s$ is determined by its initial condition, for instance, the itinerary takes different values according to the atom where the initial condition belongs $\chi_{s0}\in P_i$, for $i=3,\ldots,10$. Now the itinerary of the slave system is restricted to take a subset of the index set $\I_s$, {\it i.e.}, $\I_s(\chi_{s0})\subset  \I_s$ which will be called restricted index set. This is because the number of scrolls that the slave system coupled with $c=10$ displays less scrolls that when it is not coupled. Thus the restricted index sets have different cardinality that is determined by the initial condition  $\chi_{s0}\in P_i$, for $i=3,\ldots,10$. So for these three initial conditions there are three different restricted index sets given as follows:
\begin{equation*}
\kappa_s:\R^3 \to \I_s(\chi_{s0})\subset \I_s=
\left\{
\begin{array}{l}
\I_s(\chi_{s01})=\{3,4,5,6\},\\
\I_s(\chi_{s02})=\{5,6,7,8,9\},\\
\I_s(\chi_{s03})=\{7,8,9,10\}.
\end{array}
\right.
\end{equation*}
The cardinality of the index sets $\I_m, \I_s(\chi_{s01}), \I_s(\chi_{s02})$ and $\I_s(\chi_{s03})$ are $3,4,5,$ and 4, respectively. 

There is a problem if we want to detect similar behaviour under the presence of multistability. The inconvenience is resolved by means of defining a new itinerary based on the trajectory of the systems instead of the dynamics. Then a new partition needs to be defined in the basin of attractions of the systems that can determine an itinerary of the flow of the master and slave systems.

\begin{defi}
Let $\I_B=\{\#_1,\ldots,\#_n\}$ be an index set that labels each element of a partition $P_\phi=\{P'_{1},\ldots,P'_{n}\}$ of the basin of attraction of a dynamical system with flow $\phi$. A function $\kappa:\R^n \to \I_B$ of the form
\begin{equation*}
\kappa(\phi(t,\chi_0))=\left\{
\begin{array}{l}
\#_1, \text{ if } \phi(t,\chi_0)\in P'_1;\\
\#_2, \text{ if } \phi(t,\chi_0)\in P'_2;\\
\vdots\\
\#_n, \text{ if } \phi(t,\chi_0)\in P'_n;
\end{array}
\right.
\end{equation*}
generates the itinerary of the trajectory. If $\kappa (\phi (\chi_0 ) )=s_i \in \I_B$ during the time interval $ t \in [t_i , t_{i+1}) $, then   $S^\phi(\chi_0)=\{s_0,s_1,s_2,\ldots\}$ stands for the itinerary of the trajectory $\phi(\chi_0)$.
\end{defi}  

\begin{remark}
Notice that $S(\chi_0)$ is the itinerary of the dynamics of the system meanwhile $S^\phi(\chi_0)$ is the itinerary of the trajectory of the dynamical system.
\end{remark}

In our setting in order to describe appropriately the flows of a master-slave system via symbolic dynamics it is necessary to consider additional atoms  $P_0,P_{\eta+1}$ at the 
`ends' of the contiguous partition atoms to account for exits and returns to $P_1$ and $P_{\eta}$, respectively, to the partition $\mathcal{P}=\{P_1,\ldots,P_\eta \}$. So we code
according to the partition $  \mathcal{P_\phi}=\{P_{-n},\ldots, P_0,P_1,\ldots,P_\eta,P_{\eta+1}, \ldots, P_{N}\}$. We
obtain a symbolic trajectory by writing down the sequence of symbols corresponding to the successive partition elements visited by the trajectory during a certain period of time. 

For simplicity we generate a new partition $\mathcal{P_\phi}$$=\{P_0,P_1,P_2,\ldots,P_{10},P_{11}\}$ because the flow $\phi(\chi_0)\subset P_\phi$ and the index sets present the same cardinality. 

The partition $\mathcal{P_\phi}$ is given as follow:
\begin{equation}\label{eq:P_phi}
\begin{array}{ll}
\mathcal{P_\phi} & = \; \{\;	  
			P_{0} = \{\x \in \{\x \in \mathbf{R}^{3}: x_{1} < -0.3 \},\\
			&P_{1} = \{\x \in \mathbf{R}^{3}: -0.3 \leq x_{1} < 0.3 \},
			P_{2} = \{\x \in \mathbf{R}^{3}: 0.3 \leq x_{1} < 0.9 \},\\
      &P_{3} = \{\x \in \mathbf{R}^{3}: 0.9 \leq x_{1} < 1.5 \},
      P_{4} = \{\x \in \mathbf{R}^{3}: 1.5 \leq x_{1} < 2.1 \},\\
      &P_{5} = \{\x \in \mathbf{R}^{3}: 2.1 \leq x_{1} < 2.7 \},
      P_{6} = \{\x \in \mathbf{R}^{3}: 2.7 \leq x_{1} < 3.3 \},\\
      &P_{7} =\{\x \in \mathbf{R}^{3}: 3.3 \leq x_{1} < 3.9 \},
      P_{8} = \{\x \in \mathbf{R}^{3}: 3.9 \leq x_{1} < 4.5\},\\
      &P_{9} = \{\x \in \mathbf{R}^{3}: 4.5 \leq x_{1} < 5.1\},
      P_{10} = \{\x \in \mathbf{R}^{3}:5.1\leq x_{1} < 5.7\}, \\
			&P_{11} = \{\x \in \mathbf{R}^{3}: 5.7 \leq x_{1} \} \}.
 \end{array}
 \end{equation}
Thus $\mathcal{S}_m^\phi(\x_{m0})=\{s_0,s_1,\ldots, s_m, \ldots\}$ stands for the itinerary generated by the trajectory of the master system $\phi_m(t,\x_{m0})$ at $\x_{m0}$ and, $\mathcal{S}_m^\phi (i,\x_{m0})$ is the element $s_i \in \mathcal{S}^\phi_m(\x_0)$ that occurs at time $t_i$, so the set $\Delta_{\phi_m}=\{t_0,t_1,\ldots,t_m,\ldots\}$ is generated.  In a similar way, we can define the itinerary, $\mathcal{S}_s^\phi(\x_{s0})$ and the set $\Delta_{\phi_s}=\{t'_0,t'_1,\ldots,t'_m,\ldots\}$ generated by the trajectory of the slave system. We always assume that the initial conditions belong to their respectively basin of attraction of the system.
\begin{figure}[ht] 
\centering
\includegraphics[width=11.5cm,height=10cm]{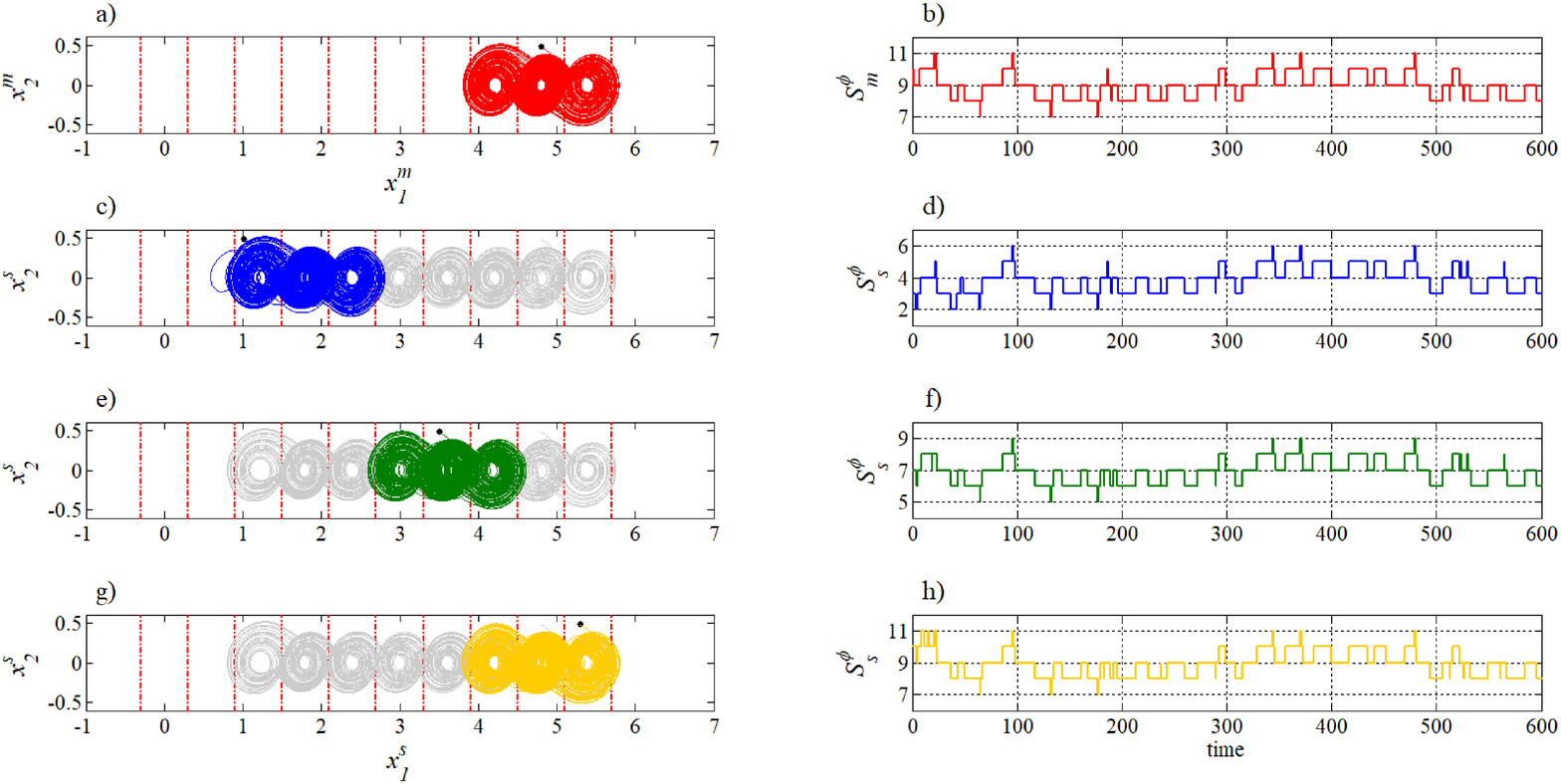}
\caption{\small Projections of the  master-slave system onto the ($x_{1},x_{2}$) plane, for $\eta_{m}=3$, $\eta_{s}=8$, $\Gamma = \{0,1,0\}$ and coupling strength $c = 10$. a) Master system with initial condition $\chi_{mo} = (4.8, 0.48, -0.29)^{T}$ and b) its itinerary $\mathcal{S}_m(\x_{mo})$. c) Slave system with $\chi_{so1} = (1.01, 0.48, -0.29)^{T}$, and d) its itinerary $\mathcal{S}_s(\x_{so1})$.  e) Slave system with  $\chi_{so2} = (3.5, 0.48, -0.29)^{T}$ and f) its itinerary $\mathcal{S}_s(\x_{so2})$. g) Slave system with $\chi_{so3} = (5.3, 0.48, -0.29)^{T}$ and h) its itinerary $\mathcal{S}_s(\x_{so3})$.}
\label{fig:Fig2}
\end{figure}

\begin{figure}[ht] 
\centering
\includegraphics[width=11cm,height=6cm]{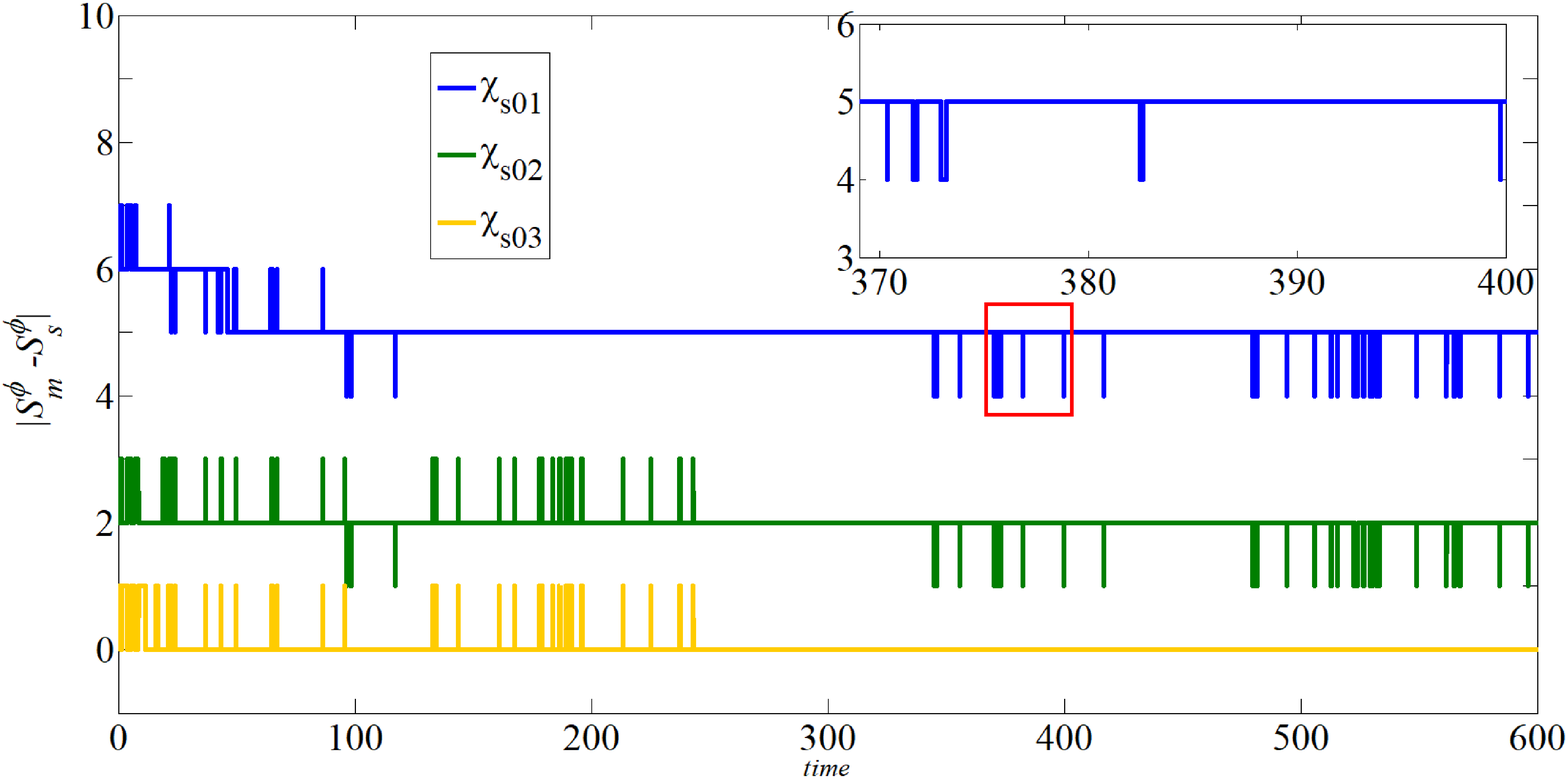}
\caption{\small Difference between the itineraries of the master and the slave systems for the initial conditions given in the example 2. The inner sub-figure shown a zoom of the region market with the red square.}
\label{fig:Fig3}
\end{figure}

Thereafter, the master index set $\I_m$ and restricted index sets $\I_s(\chi_{s01})$, $\I_s(\chi_{s02})$ and $\I_s(\chi_{s03})$ have the same cardinality independently of the initial conditions  $\chi_{s0}\in P_i$, for $i=3,\ldots,10$. Now for these three initial conditions there are three different restricted index sets with the same cardinality given as follows:
\begin{equation}\label{etiqIndxSet}
\kappa_s:\R^3 \to \I_s(\chi_{s0})\subset \I_s=
\left\{
\begin{array}{l}
\I_s(\chi_{s01})=\{2,3,4,5,6\},\\
\I_s(\chi_{s02})=\{5,6,7,8,9\},\\
\I_s(\chi_{s03})=\{7,8,9,10,11\}.
\end{array}
\right.
\end{equation}
And for the master index set:
\begin{equation*}
\kappa_m:\R^3 \to \I_m=\{7,8,9,10,11\}.
\end{equation*}
The cardinality of all of the index sets $\I_m, \I_s(\chi_{s01}), \I_s(\chi_{s02})$ and $\I_s(\chi_{s03})$ is 5. Figure \ref{fig:Fig2}~a) shows the projection of the master attractor onto the plane $(x_1^m,x_2^m)$ and the atoms of $P_\phi$ are marked. Figure~\ref{fig:Fig2}~c),~e) and g) shows the projection of the slave attractor onto the plane $(x_1^s,x_2^s)$ for different initial conditions and the atoms of $P_\phi$ are marked. In Figure \ref{fig:Fig2}~b) we show the itinerary of the master system $\mathcal{S}^\phi_m(\x_{m0}$) and in Figures \ref{fig:Fig2}~d), \ref{fig:Fig2}~f) and \ref{fig:Fig2}~h) the itinerary of the slave system by varying the initial condition. Notice that the itinerary of the trajectory of the master system and the three itineraries of the trajectories of the slave system for different initial conditions visit five different domains. Figure~\ref{fig:Fig3} shows three errors signals which were generated by the difference between the master itinerary $\mathcal{S}^{\phi}_m(i,\chi_{m_0})$ and slave itineraries for different initial conditions $\mathcal{S}^{\phi}_s(i,\chi_{s0})$, with $\chi_{s0}=\{\chi_{s0_1},\chi_{s0_2},\chi_{s0_3}\}$. These signals show spikes that correspond to  when the trajectory goes from one atom to other. But it is possible to estimate  the error  $k$ which is given by a constant in the three cases: for the initial condition $\chi_{so1} = (1.01, 0.48, -0.29)^{T}$ generates on average $k_{1}=5$, $\chi_{so2} = (3.5, 0.48, -0.29)^{T}$ generates $k_{2}=2$ and $\chi_{so3} = (5.3, 0.48, -0.29)^{T}$ generates $k_{3}=0$. 
These small peaks along the error signals show that  we achieve  itinerary synchronization in practice.  As in the inner sub-figure of the Figure \ref{fig:Fig3}  we can observe that  differences due to atom transitions last a small time period.
It is possible to relabel the restricted index sets that correspond to error $k's$ different to zero in order to obtain a error zero. For instance, we have two specific cases that correspond to the restricted index sets $\I_s(\chi_{s0_1})$ and $\I_s(\chi_{s0_2})$, they can relabel as $\I_s(\chi_{s0_3})$.

In the context of synchronization and multistability, we propose the following definition of synchronization based on the itinerary of trajectories in multiscroll attractors:

\begin{defi}  
	The master-slave system \eqref{eq:MS} is said to achieve Itinerary Synchronization if after relabeling the partition atoms 
	\begin{equation}\label{ec_ItineSynch}
	 \lim_{i\to \infty}|\mathcal{S}^{\phi}_m(i,\x_{m0})-\mathcal{S}^{\phi}_s(i,\x_{s0})| = 0,
	\end{equation} 
	for $\x_{m0}\neq \x_{s0}$.
\end{defi}
\begin{remark}
When the master-slave system \eqref{eq:MS} is given by identical systems, {\it i.e.}, with the same scroll-degree, then the coupled system \eqref{eq:MS} can present complete synchronization, {\it i.e.}, the systems are  asymptotically identical. Hence the trajectories of the master and slave visit the same atoms at the same time (without relabelling)  and 
so they have asymptotically  the same itinerary, hence the  itinerary error is zero $|\mathcal{S}^{\phi}_m(i,\x_{m0})-\mathcal{S}^{\phi}_s(i,\x_{s0})|=0$.
\end{remark}
The definition of  Itinerary Synchronization is meant to capture the idea that knowing the itinerary of one sequence determines precisely the itinerary of the other (after relabeling).
\begin{figure}[t] 
\centering
\includegraphics[width=12cm,height=10cm]{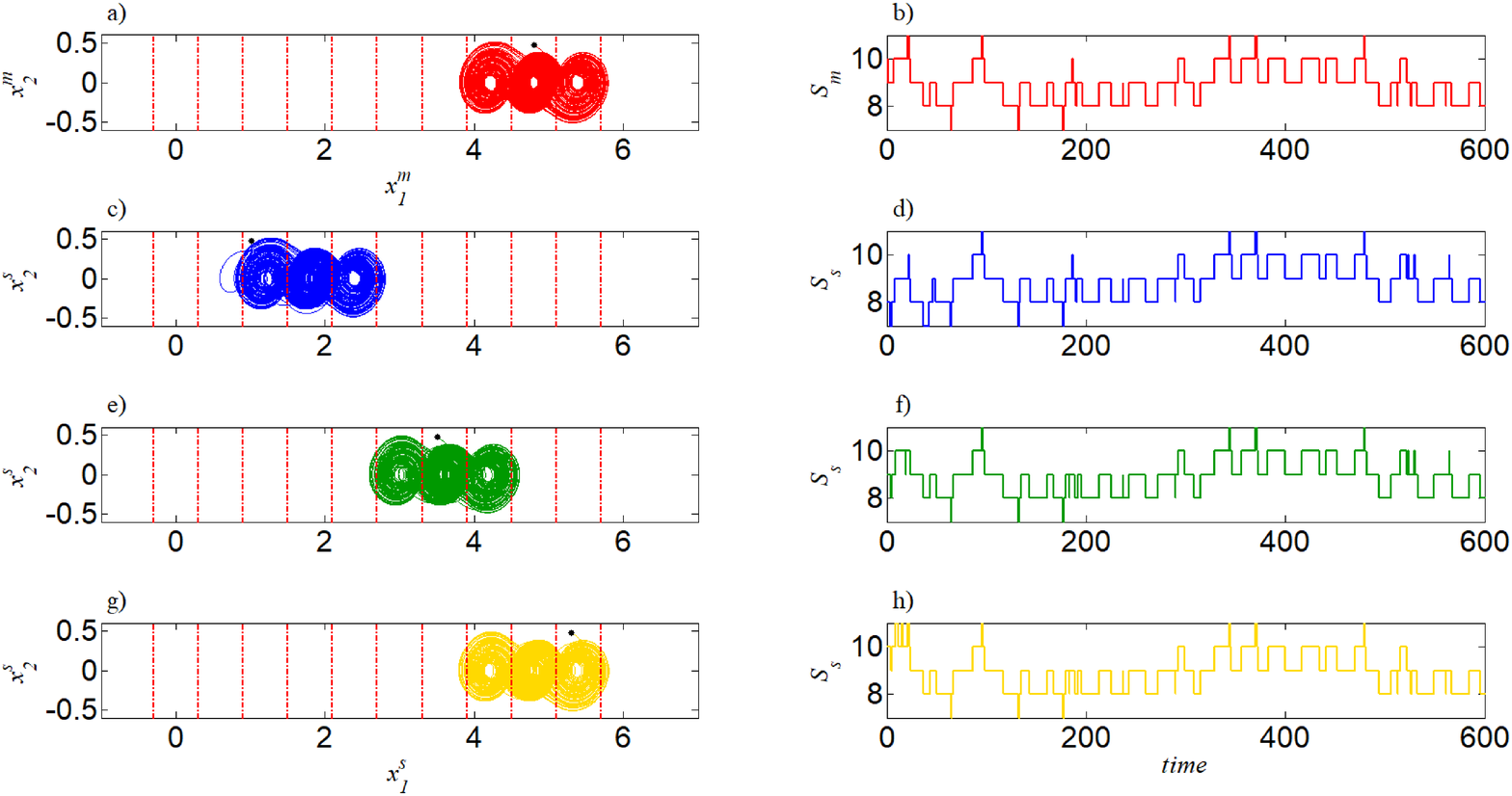}
\caption{\small Projections of the  master-slave system onto the ($x_{1},x_{2}$) plane, for $\eta_{m}=3$, $\eta_{s}=8$, $\Gamma = \{0,1,0\}$ and coupling strength $c = 10$. a) Master system with initial condition $\chi_{mo} = (4.8, 0.48, -0.29)^{T}$ and b) its itinerary $\mathcal{S}_m(\x_{mo})$. c) Slave system with $\chi_{so1} = (1.01, 0.48, -0.29)^{T}$, and d) its itinerary $\mathcal{S}_s(\x_{so1})$ after it was relabeled.  e) Slave system with  $\chi_{so2} = (3.5, 0.48, -0.29)^{T}$ and f) its itinerary $\mathcal{S}_s(\x_{so2})$ after it was relabeled. . g) Slave system with $\chi_{so3} = (5.3, 0.48, -0.29)^{T}$ and h) its itinerary $\mathcal{S}_s(\x_{so3})$ without being relabeled.}
\label{fig:Relabel}
\end{figure}

\begin{figure}[ht] 
\centering
\includegraphics[width=12cm,height=6cm]{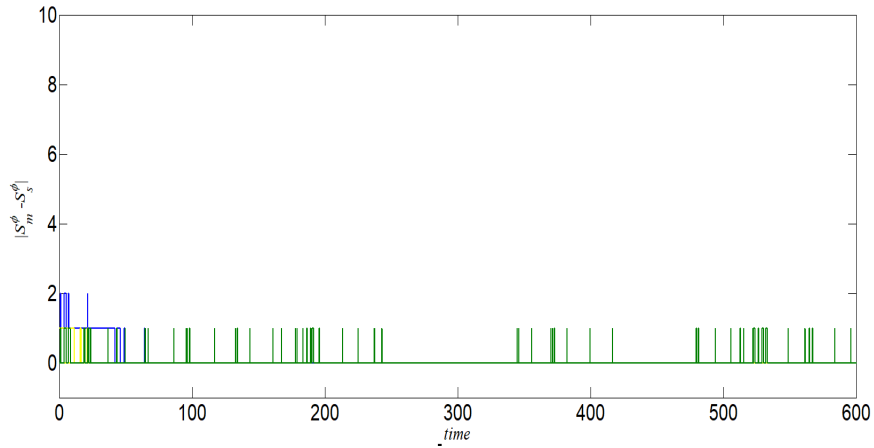}
\caption{\small Difference between the itineraries of the master and the slave systems after relabeling the visited atoms for the initial conditions given in the example 2. }
\label{fig:ErrorRelabel}
\end{figure}

In Figure \ref{fig:Relabel}~b) we show the itinerary of the master system $\mathcal{S}^\phi_m(\x_{m0}$) and in Figures \ref{fig:Relabel}~d), \ref{fig:Relabel}~f) and \ref{fig:Relabel}~h) the itinerary of the slave system by varying the initial condition. Notice that the itinerary of the master system and the three itineraries of the slave system after relabeling the atoms for different initial conditions present approximately the same itinerary at the same time intervals. Hence they achieve itinerary synchronization. Figure \eqref{fig:ErrorRelabel} shows three errors signals which were generated by the difference between the master itinerary and slave itineraries after relabeling the atoms for different initial conditions. These signals show spikes that correspond to  when the trajectory goes from one atom to other. Now the errors  $k$ in the three cases: for the initial condition $\chi_{so1} = (1.01, 0.48, -0.29)^{T}$ generates on average $k_{1}=0$, $\chi_{so2} = (3.5, 0.48, -0.29)^{T}$ generates $k_{2}=0$ and $\chi_{so3} = (5.3, 0.48, -0.29)^{T}$ generates $k_{3}=0$. 
These small peaks along the error signals, and  we achieve  itinerary synchronization in practice.

\begin{figure}[ht] 
\centering
\includegraphics[width=12cm,height=8cm]{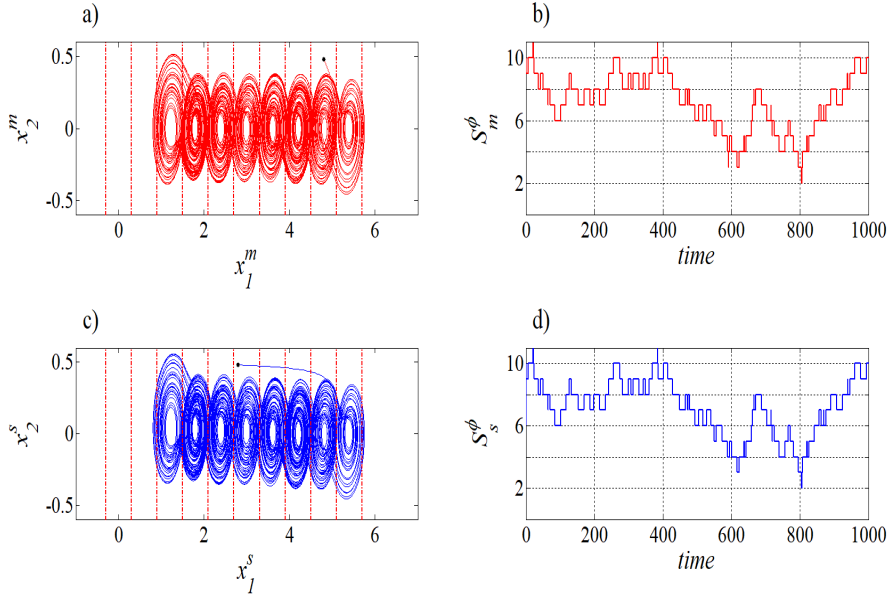} 
\caption{\small Projections of the master (red color) and slave (blue color) attractors onto the ($x_{1},x_{2}$) plane with $\eta_{m}=8$, $\eta_{s}=3$, $\Gamma = \{1,1,1\}$, coupling strength $c = 10$; initial conditions $\chi_{so} = ( 2.8,0.48, -0.29)^{T}$ and $\chi_{mo} = (4.8,0.48, -0.29)^{T}$for the slave and master systems, respectively.}
\label{fig:Fig4}
\end{figure}

{\bf Example 3:} 
Another example of coupled systems is given when the master scroll degree is greater than the slave scroll degree, for example, suppose that the master's scroll-degree is $\eta_{m} = 8$  and  the slave's scroll-degree is $\eta_{s} = 3$. Then, the pure-state-feedback signal for the master system is \eqref{eq:switching_law_8} and for the slave system is  \eqref{eq:switching_law_3}. The inner coupling matrix is $\Gamma = diag\{1,1,1\}$ and the coupling strength is $c=10$.  Figure \ref{fig:Fig4}~a) and c) shows the projections of the master and slave attractors given by \eqref{eq:MS} onto the ($x_{1}^m,x_{2}^m$) and ($x_{1}^s,x_{2}^s$) planes, respectively, generated with initial condition  $\chi_{mo}$ given above for the master system and $\chi_{so} = ( 2.8, 0.48, -0.29)^{T}$ for the slave system. Note that the slave system increases its scroll-degree to $\eta_{m} = 8$.  Figure \ref{fig:Fig4}~b) and d) shows the master and slave itineraries, respectively.   
This result is given by the following proposition~\ref{propMSS}.
 
\begin{prop}\label{propMSS}
Consider a master-slave system composed of  quasi-sym\-met\-ri\-cal $\eta$-PWL systems described by \eqref{eq:MS} and pure-state-feedback signals $\kappa_{m}(\x)$,  and $\kappa_{s}(\x)$ with  $\Gamma=diag\{1,1,1\}$. If the master-slave system presents com\-ple\-te synchronization this implies that the nodes present itinerary synchronization and the coupled pair of systems display the scroll-degree $\eta_m$ of the master system. 
\end{prop}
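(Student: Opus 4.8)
The plan is to deduce both assertions directly from the hypothesis, using the one structural fact that the partition $\mathcal{P}_\phi$ defining the trajectory itineraries consists of slabs bounded by the switching hyperplanes $\mathbf{v}^{T}\x=\delta_i$ with $\mathbf{v}=(1,0,0)^{T}$, so that the symbol assigned to an orbit point depends only on that point's first coordinate. Everything below leans on this observation.

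First I would unpack the hypothesis: by the definition of complete synchronization, the coupled system \eqref{eq:MS} satisfies $\|\phi(t,\x_{m0})-\phi(t,\x_{s0})\|\to 0$ as $t\to\infty$, hence in particular $|x_1^{m}(t)-x_1^{s}(t)|\to 0$. Given $\epsilon>0$, pick $T$ with $|x_1^{m}(t)-x_1^{s}(t)|<\epsilon$ for $t>T$. Then for every $t>T$ at which the master orbit lies at distance more than $\epsilon$ from all the switching hyperplanes, the slave orbit lies in the same atom of $\mathcal{P}_\phi$, so $\mathcal{S}^{\phi}_m$ and $\mathcal{S}^{\phi}_s$ carry the same symbol at that instant. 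A disagreement can arise only while the master orbit is inside an $\epsilon$-tube about some switching surface; since the bounded one-spiral orbits of a UDS of Type~I cross these surfaces transversally, each such passage lasts a time $O(\epsilon)$ and therefore contributes only a transient spike to $|\mathcal{S}^{\phi}_m(i,\x_{m0})-\mathcal{S}^{\phi}_s(i,\x_{s0})|$ that disappears as $i\to\infty$. Letting $\epsilon\to0$ gives \eqref{ec_ItineSynch}; moreover, because the slave orbit converges to the master orbit, it eventually visits exactly the atoms visited by the master, so no relabeling is actually required and the pair achieves itinerary synchronization.

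Next I would count scrolls. The master, a quasi-symmetrical $\eta$-PWL system of scroll-degree $\eta_m$, has an attractor $\mathcal{A}_m$ built from $\eta_m$ one-spiral pieces lying in $\eta_m$ distinct atoms; write $J$ for that set of labels. Since $\|\x_s(t)-\x_m(t)\|\to 0$, the $\omega$-limit set of the slave orbit equals $\mathcal{A}_m$, hence the slave orbit asymptotically occupies precisely the scrolls indexed by $J$ and so exhibits $\eta_m$ of them. Both regimes are covered: if $\eta_s>\eta_m$ the slave stops visiting the surplus atoms and its scroll count drops to $\eta_m$, while if $\eta_s<\eta_m$ the coupling term $c\Gamma(\x_m-\x_s)$, which for $\Gamma=diag\{1,1,1\}$ equals $c(\x_m-\x_s)$, drives the slave through the extra atoms -- where $\kappa_s$ takes its extreme ``catch-all'' value -- so the slave orbit still reproduces all $\eta_m$ scrolls of $\mathcal{A}_m$. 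Thus the coupled pair displays the scroll-degree $\eta_m$ of the master.

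The step I expect to be the main obstacle is the estimate near the switching surfaces used in the second paragraph: one must be sure that no orbit segment dwells on a hyperplane $\mathbf{v}^{T}\x=\delta_i$, since there the symbol assignment is ambiguous and a dwelling orbit could turn a transient itinerary mismatch into a persistent one. For the UDS of Type~I this is controlled by transversality of the spiralling flow to the switching planes: tangencies occur only on a lower-dimensional subset met by the orbit at isolated instants, so the sojourn time inside an $\epsilon$-tube about each plane is $O(\epsilon)$ and the itinerary error genuinely tends to zero. With that estimate in hand, both conclusions follow as outlined.
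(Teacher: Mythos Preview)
Your argument is correct and is actually more careful than the paper's own proof, but it proceeds along a different line. The paper does not use the partition geometry or transversality at all; instead it writes down the error variable $e=\x_m-\x_s$, derives the error equation
\[
\dot e = Ae + B_{\kappa_m(\x_m)} - B_{\kappa_s(\x_s)} - c\Gamma e,
\]
and then argues that complete synchronization ($e\to 0$) forces $\dot e\to 0$, whence $B_{\kappa_s(\x_s)}\to B_{\kappa_m(\x_m)}$. From the coincidence of the $B$-vectors it reads off both itinerary synchronization and the scroll-degree claim in one stroke. Your route, by contrast, works directly at the level of orbit positions: you use that the itinerary symbol depends only on $x_1$, bound the mismatch time near each switching plane by a transversality estimate, and then identify the slave's $\omega$-limit set with the master attractor $\mathcal A_m$ to count scrolls. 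What your approach buys is rigor at the one delicate point the paper glosses over --- the inference $e\to 0\Rightarrow \dot e\to 0$ is not valid in general, so the paper's deduction that $B_{\kappa_s}\to B_{\kappa_m}$ is really heuristic --- while the paper's approach buys brevity and makes explicit why the scroll-degree is inherited (same $B$-vector means same active equilibrium). Your acknowledged obstacle, ruling out dwelling on a switching surface, is genuine but mild here: since $\dot x_1=x_2$ for the operator $A$ in \eqref{eq:example1}, tangencies to $\{x_1=\delta_i\}$ occur only where $x_2=0$, a codimension-one set met by the spiralling orbit at isolated instants, which is exactly the estimate you need.
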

\begin{proof}
The master slave system is given by
\begin{equation}\label{eq:prop1}
\begin{array}{lll}
    \dot{\x}_{m}  & = A\x_{m}   + B_{\kappa_{m}(\x_{m})},  \\ 
    \dot{\x}_{s}  & =  A\x_{s}  + B_{\kappa_{s}(\x_{s})}  + c \Gamma( \x_{m} - \x_{s}).
\end{array} 
\end{equation}
When the system presents complete synchronization this implies that $|\x_m-\x_s|=0$. Defining the error between the master and slave systems as $e=\x_i-\x_j=(e^{x_1},e^{x_2},e^{x_3})^T$, where $e^{x_1}=x_{m1}-x_{s1}$, $e^{x_2}=x_{m2}-x_{s2}$ and $e^{x_3}=x_{m3}-x_{s3}$.
Thus the error system is given by
$$\dot{e}=Ae+B_{\kappa_{m}(\x_{m})}-B_{\kappa_{s}(\x_{s})}+c \Gamma e.$$
For complete synchronization we have that error  $lim_{t\to \infty}e= 0$ then $\lim_{t\to \infty}\dot{e}=0$. So we have that $B_{\kappa_{s}(\x_{s})}$ behaves in a similar manner to $B_{\kappa_{m}(\x_{m})}$, that is, $B_{\kappa_{s}(\x_{s})} \to B_{\kappa_{m}(\x_{m})}$. \\
The master-slave system displays itinerary synchronization. 
\end{proof}
 
\subsection{Dynamical Networks}

A dynamical network is composed of $N$ coupled dynamical systems called nodes \cite{Wang2003b}. Each node  is labeled by an index $i = 1,\ldots,N$ and described by a first ordinary differential equation system of the form $\dot{\x}_{i}(t) = f_{i}(\x_{i}(t))$, where $x_i(t) = (x_{i1}(t),\ldots,x_{in}(t))^{T} \in \R^{n}$ is the state vector and, $f_{i}: \R^{n} \rightarrow \R^{n}$ is  the vector field which describes the dynamical behavior of an $i$-th node when it is not connected to the network.  The coupling between neighboring nodes is assumed to be linear such that the state equation of the entire network is described by the following equations:
\begin{equation}\label{eq:dynamial_network}
\dot{\x}_{i}(t) = f_{i}(\x_{i}(t)) + c \sum_{j = 1}^{N} \Delta_{ij} \Gamma (\x_{j}(t) - \x_{i}(t)), \quad i = 1,\ldots,N,
\end{equation}

\noindent where $c$ is the uniform coupling strength between the nodes and the inner linking matrix $\Gamma  = diag\{r_1,\ldots,r_n \} \in \R^{n \times n}$ is described in \eqref{eq:MS}.  The constant matrix  $\Delta = \{ \Delta_{ij} \} \in \R^{N \times N}$ is named the coupling matrix whose elements are zero or one depending on which nodes are connected or not. Such matrix contains the entire information about the network configuration topology and it is constructed according to its link attributes.  In specific, if nodes are coupled with bidirectional links, then $\Delta$ is a symmetric matrix with the following entries:
if there is a connection between node $i$ and node $j$ (with $i \neq j$), then $\Delta_{ij} = \Delta_{ji} = 1$; otherwise $\Delta_{ij} = \Delta_{ji} = 0$. 

On the other hand, if the nodes are connected with unidirectional links, then $\Delta$ becomes a non-symmetric matrix  and its entries are defined as follows: 
  $\Delta_{ij} = 1$ (with $i \neq j$)  indicates the presence of an edge directed from  node $j$ to node $i$; and the entry $\Delta_{ij} = 0$ indicates that node $j$ is not connected to node $i$. 

For the dynamical network \eqref{eq:dynamial_network} with symmetrical coupling matrix, one of the most studied collective phenomena is  synchronization, which emerges when the dynamical behavior between nodes are correlated in-time (See \cite{Wang2003b} and references there in). 


\section{Ring topology network}
\begin{figure}[ht]
  \centering
        \begin{subfigure}[b]{0.35\textwidth}
          \includegraphics[width=\textwidth,, height=\textwidth]{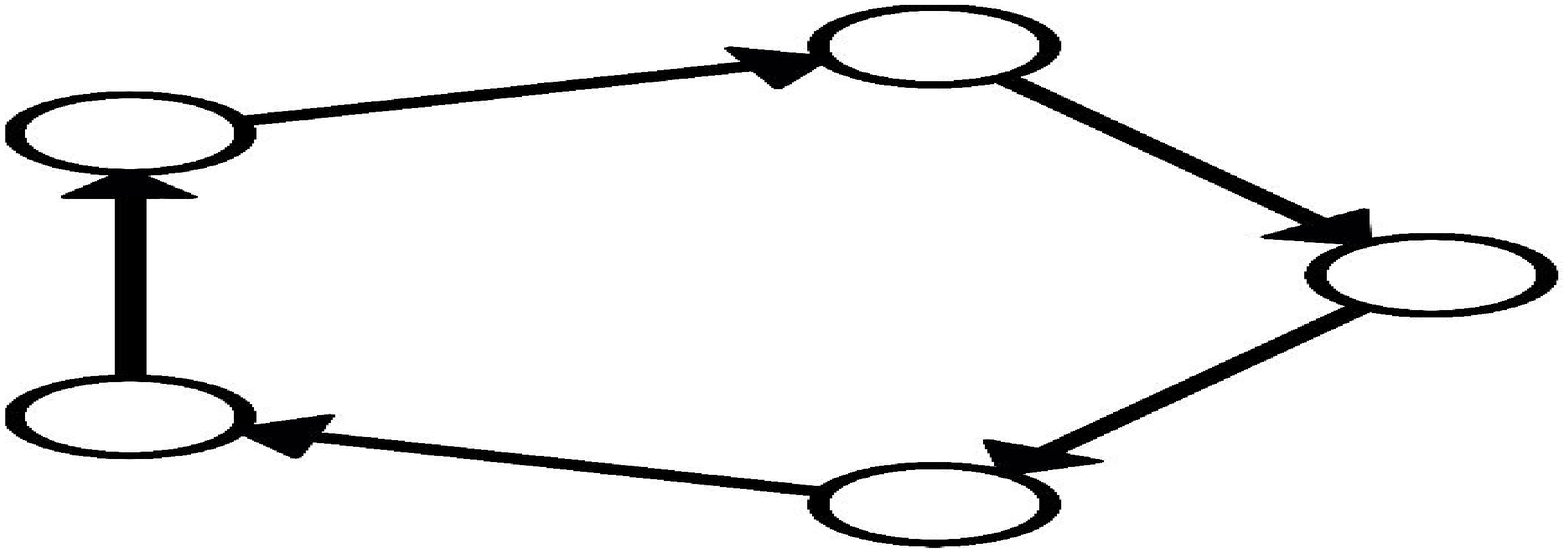}
         \caption{}
         \label{fig4a}
       \end{subfigure}
       ~
   \hspace{-1pt} 
   \begin{subfigure}[b]{0.45\textwidth} 
   \vspace{10pt}  
      \[
        \Delta = \begin{pmatrix}
           $0$ & $0$ & $0$ & $0$ & $1$  \\
	   $1$ & $0$ & $0$ & $0$ & $0$ \\
	   $0$ & $1$ & $0$ & $0$ & $0$ \\
	   $0$ & $0$ & $1$ & $0$ & $0$ \\
	   $0$ & $0$ & $0$ & $1$ & $0$ \\
       \end{pmatrix}
        \]
       \caption{}
       \label{fig4b} 
       \end{subfigure}
\caption{A network of $N = 5$ nodes coupled in a ring topology with uni-directional links. a) The network topology and b) the coupling matrix.}
\label{fig:Ring_Connection}
\end{figure}
We study the collective dynamics of $N$ coupled quasi-symmetrical $\eta$-PWL systems which are connected by unidirectional links in a ring topology, \textit{i.e.}, a network composed of an ensemble of master-slave systems coupled in a cascade configuration topology. In this context, a system defined in the node $i$ is a slave system of a system defined in the node $i-1$, and also plays the role of a master system for a system defined in the node $i+1$. Figure~\ref{fig:Ring_Connection}~(a)  shows a network with a ring topology and  \ref{fig:Ring_Connection}~(b) its corresponding coupling matrix $\Delta$. A network with such attributes is described by the following state equations:

\begin{equation}\label{eq:MS_ClosedRing}
\left\{
\begin{array}{l}
    \dot{\x}_{1}   = A\x_{1}   + B_{\kappa_{1}(\x_{1})}  + c\Gamma ( \x_{N} - \x_{1}), \\ 
    \dot{\x}_{2}   =  A\x_{2}  + B_{\kappa_{2}(\x_{2})}  + c\Gamma ( \x_{1} - \x_{2}), \\
    \dot{\x}_{3}   =  A\x_{3}  + B_{\kappa_{3}(\x_{3})}  + c\Gamma ( \x_{2} - \x_{3}),\\
     \hspace{5pt} \vdots           \hspace{50pt} \vdots  \\
    \dot{\x}_{N}    =  A\x_{N}  +  B_{\kappa_{N}(\x_{N})} + c\Gamma ( \x_{N-1} - \x_{N}),    
\end{array} 
\right.
\end{equation}

\noindent where $\x_i$, $i = 1, 2,\ldots, N$, denotes the state vector of each node.  Notice that the system \eqref{eq:MS_ClosedRing} is a dynamical network where each node differs only in the constant vector $B_{\kappa_{i}(\cdot)}$. In this context, we propose the following definition of a network of nearly identical nodes:

\begin{defi} 
A network of nearly identical nodes is a network composed of nodes with dynamics given by quasi-symmetrical $\eta$-PWL systems, {\it i.e.}, $A_{i} = A_{j} = A$, $\eta_{i} \neq \eta_{j}$ and $\kappa_{i}(\cdot) \neq  \kappa_{j}(\cdot)$ $\forall i,j =1,2,\ldots,N$ whose state equation is written as follows:
\begin{equation}\label{eq:nearest-identical-network}
\dot{\x}_{i} =  A\x_{i} + B_{\kappa_{i}(\x_{i})}  + c \sum_{j = 1}^{N} \Delta_{ij} \Gamma ( \x_{j} - \x_{i}), \quad i = 1,\ldots,N.
\end{equation}
\end{defi}

Note that \eqref{eq:nearest-identical-network} corresponds to a dynamical network with a  configuration topology given by the coupling matrix $\Delta = \{\Delta_{ij}\} \in \R^{N \times N}$. In particular, for a ring topology (Figure~\ref{fig:Ring_Connection}), the equation \eqref{eq:nearest-identical-network} becomes the equation \eqref{eq:MS_ClosedRing}.

We now study the dynamics of a nearly identical network \eqref{eq:nearest-identical-network} assuming that the coupling matrix  corresponds to a network with a ring topology and with unidirectional links. We analyze  the conditions under which this nearly identical network  achieves itinerary synchronization. We then consider  the case in which the network \eqref{eq:nearest-identical-network}  is attacked via link deletion. 


\subsection{Node's dynamics}

Since the dynamics of a single node is governed by an UDS system, we know that the linear operator is diagonalizable \textit{i.e.} exist a matrix $\Phi \in \R^{3\times 3}$ such that $\Lambda = \Phi A \Phi^{T}$ with $\Lambda = diag\{\lambda_{1},\lambda_{2},\lambda_{3}$\} and $\Phi\Phi^{T} = \mathit{I}$. By introducing the change of variable $z = \Phi \x$, we rewrite the equation \eqref{eq:nearest-identical-network} as follows: 
\begin{equation}\label{eq:nearest_net_variable_change}
\dot{z}_{i} =  \Lambda z_{i} + \hat{B}_{\kappa_{i}(z_{i})}  + c \sum_{q = 1}^{N} \Delta_{iq} \Gamma z_{iq}  , \quad i = 1,\ldots,N,
\end{equation}

\noindent where $\hat{B}_{\kappa_{i}(z_{i})} = \Phi B_{\kappa_{i}(\x_{i})}$ and $z_{iq} = z_{q} - z_{i}$. The solution of \eqref{eq:nearest_net_variable_change} is:
\begin{equation}\label{eq:net_solution}
   z_{i}(t) = e^{\Lambda t} z_{i}(0)  + \int_{0}^{t}  e^{\Lambda (t-\tau)} \hat{B}_{\kappa_{i}(z_{i})} d\tau +   c \sum_{q = 1}^{N} \Delta_{iq} \Gamma \int_{0}^{t} e^{\Lambda (t-\tau)}  z_{iq}(\tau)  d\tau,
\end{equation}

\noindent where  $z_{i}(0)$ is the initial condition of the i-th node in the new state variable. Note that  \eqref{eq:net_solution} is given by using the Peano-Baker series with $e^{\Lambda t}$ as the state transition matrix since $\Lambda$ is a constant matrix. 

The difference between the state vector of the i-th and j-th nodes is:
\begin{equation}\label{eq:difference}
   z_{i}(t) - z_{j}(t) = e^{\Lambda t} z^{0}_{ij}  + \int_{0}^{t}  e^{\Lambda (t-\tau)} \hat{B}_{ij} d\tau +   c \sum_{q = 1}^{N}  \Gamma \int_{0}^{t} e^{\Lambda (t-\tau)}  \hat{z}^{q}_{ij}(\tau)  d\tau,
\end{equation}
 
\noindent where $z^{0}_{ij} = z_{i}(0) - z_{j}(0)$, $\hat{B}_{ij} = \hat{B}_{\kappa_{i}(z_{i})} - \hat{B}_{\kappa_{j}(z_{j})} $ and  $\hat{z}^{q}_{ij}(\tau) = \Delta_{iq}z_{iq}(\tau)  - \Delta_{jq}z_{jq}(\tau)$. Then, by assuming that $|| z_{i}(0) - z_{j}(0) || \leq \delta_{0}$ for all $i,j = 1, \ldots, N$ and using the triangle inequality we get:

\begin{equation}\label{eq:inequality_cond}
   ||z_{i}(t) - z_{j}(t)|| \leq ||e^{\Lambda t}|| \delta_{0}  +  I_{1}(t)  +    I_{2}(t),
\end{equation}

\noindent where
\begin{equation}\label{eq:I1}
  I_{1}(t) = \int_{0}^{t}  || e^{\Lambda (t-\tau)} \hat{B}_{ij} || d\tau, \quad \text{and} 
  \quad  I_{2}(t) = c \sum_{q = 1}^{N}  \int_{0}^{t} \Gamma  || e^{\Lambda (t-\tau)}  \hat{z}^{q}_{ij}(\tau) || d\tau .
\end{equation}

Both integrals satisfy $I_{1}(0) = I_{2}(0) = 0$. Then, in a practical sense  complete synchronization is achieved if there exists a time instant $T$ such that $I_{1}(t) \leq \epsilon_{1}$ and $I_{2}(t) \leq \epsilon_{2}$ for all $t \geq T$ and for all $i,j = 1,\ldots,N$. In this case the error bound is given by $\epsilon = ||e^{\Lambda T}|| \delta_{0} + \epsilon_{1} +\epsilon_{2}$. Note that if the node trajectories approach each other, then $\hat{B}_{ij} = 0$. Itinerary synchronization occurs since both signals generate the same symbolic dynamics. The error bound depends mainly on $I_{2}(t)$, and in particular, in the form of the inner coupling matrix $\Gamma$.

In a master-slave system, when the scroll-degree of the slave system is greater than the scroll-degree of the master then multistability phenomenon appears and complete synchronization implies itinerary synchronization as it was shown in proposition~\ref{propMSS}. However the converse is not true, {\it i.e.}, itinerary synchronization does not imply complete synchronization if there is multistability (different basins of attraction $\Omega_1, \Omega_2,\ldots,\Omega_k$, with $2\leq k\in\Z$). For example, when the initial conditions of the slave system belong to different basins of attraction, this leads to
$$\lim_{t\to \infty}|S^\phi_m(i,\chi_{m0})-S_s^\phi(i,\chi_{s0})|=k.$$
With $k\neq 0$ and  complete synchronization is lost but itinerary synchronization persists after relabeling.


\section{Different inner coupling matrix $\Gamma$ }\label{InnerCouplingSec}


\subsection{Dynamics in a ring topology}

\begin{figure}[t]
\centering
\includegraphics[width=11.5cm,height=10cm]{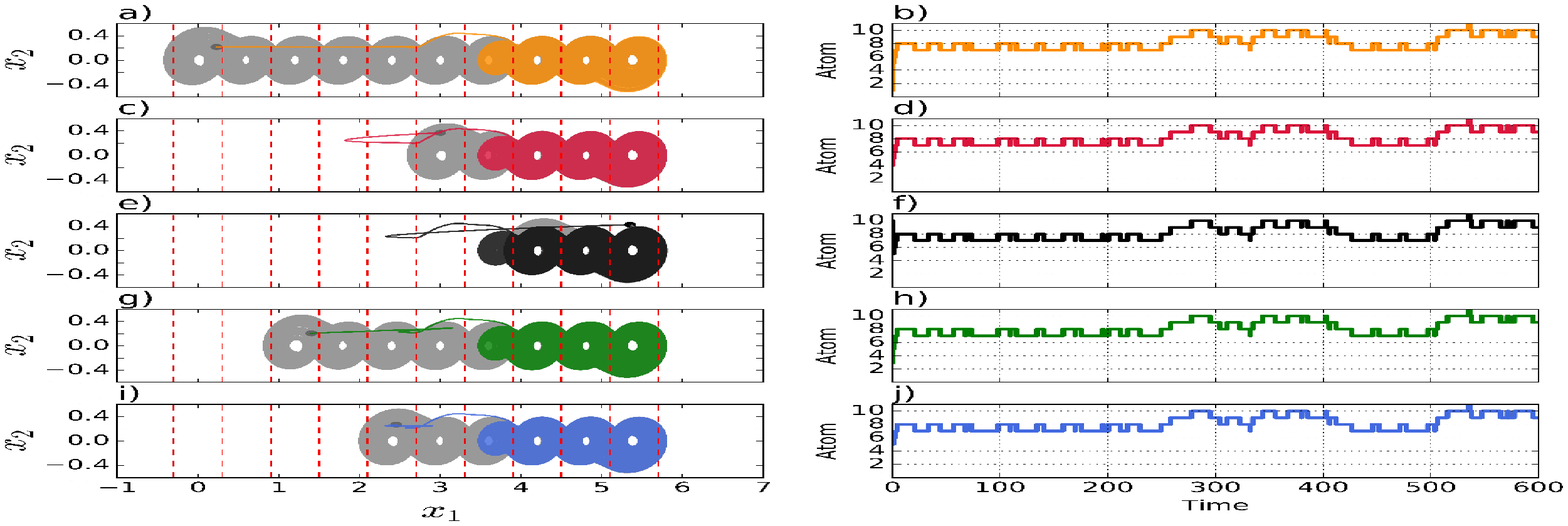}
\caption{\small Dynamics of a  nearly identical network \eqref{eq:nearest-identical-network} with coupling strength $c=10$ and $\Gamma = diag\{1,1,1\}$; the scroll-degree and initial  condition for each node are given in Table \eqref{table:scroll_degree_config}: a); c); e); g); i); The projections of the attractors onto the plane $(x_1 ,x_2)$ of the node 1,2,3,4 and 5 respectively (the gray line represent the projection of the trajectory of the node without connection); and b); d); f); h); j) its itinerary.}
\label{fig:Example_4}
\end{figure}

We consider a ring network with five nodes, {\it i.e.}, $N = 5$ nearly identical nod\-es
 described in \eqref{eq:nearest-identical-network} and coupled in a ring topology. We assume that each node's dynamic is described by the same linear operator $A$ (\textit{i.e} they are quasi-symmetrical) and the set of constant vectors $\textbf{B} = \{B_{1},B_{2},\ldots,B_{10}\}$  are those given by \eqref{eq:example1}.  Further, for each node we select the scroll-degree ($\eta_{i}$) and its corresponding initial condition according to Table \eqref{table:scroll_degree_config}. 

\begin{table}[ht]
\centering
\begin{tabular}{ccl}
Node's label & Scroll-degree & Initial condition  \\
\midrule
\rowcolor{black!20}  1 & 10 & $( 0.227, -0.216, -0.359)^{T}$\\
                                 2 & 5  &  $( 3.014, -0.371, -0.271)^{T}$ \\
\rowcolor{black!20}  3 & 3   & $( 5.349, -0.424, -0.279)^{T}$\\
			       4 &  8    &  $(1.402, -0.205, -0.316)^{T}$\\
\rowcolor{black!20}  5 &  6     &  $(2.452, -0.266, -0.308)^{T}$\\
\bottomrule
\end{tabular}
\caption{The scroll-degree ($\eta_{i}$) and its corresponding initial condition for each node in the nearly identical network coupled in a ring topology for Examples 4 and 5.}  
\label{table:scroll_degree_config}
\end{table}

The pure-state-feedback signal for the first node with scroll-degree $\eta_{1} = 10$ is given by  \eqref{eq:switching_law_10}; for the third and fourth nodes with scroll-degree $\eta_{3} = 3$ and $\eta_{4} = 8$ are given by  \eqref{eq:switching_law_3} and \eqref{eq:switching_law_8} respectively. For the second node with scroll degree $\eta_{2} = 5$ the pure-state-feedback signal is given as follows: 
\begin{equation}\label{eq:switching_law_5}
\kappa_{5}(\x) =  \left\{
\begin{array}{lll}
    1,   &  \text{if}   & \x \in P_{10} = \{\x \in \R^{3}: x_{1} \geq 5.1\}; \\
    2,   &  \text{if}   & \x \in P_{9} = \{\x \in \R^{3}: 4.5 \leq x_{1} < 5.1\}; \\
    3,  &  \text{if}   & \x \in P_{8} = \{\x \in \R^{3}: 3.9 \leq x_{1} < 4.5\};   \\
    4,   &  \text{if}   & \x \in P_{7} = \{\x \in \R^{3}: 3.3 \leq x_{1} < 3.9 \};   \\
    5,   &  \text{if}   & \x \in P_{6} = \{\x \in \R^{3}:  x_{1} < 3.3 \} . \\
\end{array} 
\right.
\end{equation}

\noindent And for the fifth node with scroll degree $\eta_{5} = 6$ is
\begin{equation}\label{eq:switching_law_6}
\kappa_{6}(x) =  \left\{
\begin{array}{lll}
    1, &  \text{if}   & \x \in P_{10} = \{\x \in \R^{3}: x_{1} \geq 5.1\}; \\
    2, &  \text{if}   & \x \in P_{9} = \{\x \in \R^{3}: 4.5 \leq x_{1} < 5.1\};  \\
    3, &  \text{if}   & \x \in P_{8} = \{\x \in \R^{3}: 3.9 \leq x_{1} < 4.5\};   \\
    4, &  \text{if}   & \x \in P_{7} = \{\x \in \R^{3}: 3.3 \leq x_{1} < 3.9 \};   \\
    5, &  \text{if}   & \x \in P_{6} = \{\x \in \R^{3}: 2.7 \leq x_{1} < 3.3 \};  \\
    6, &  \text{if}   & \x \in P_{5} = \{\x \in \R^{3}:  x_{1} < 2.7 \} . \\
\end{array} 
\right.
\end{equation}

\begin{figure}[ht]
\centering
\includegraphics[width=11.5cm,height=10cm]{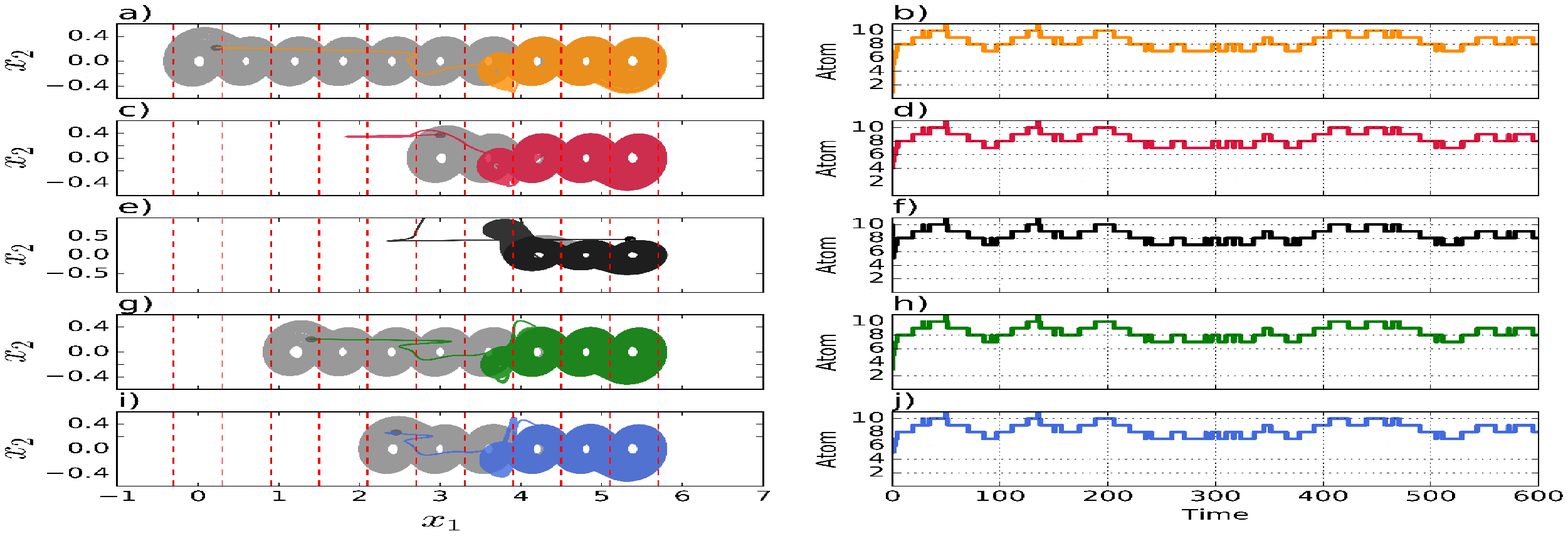}
\caption{ \small Dynamics of a  nearly identical network \eqref{eq:nearest-identical-network} with coupling strength $c=10$ and $\Gamma = diag\{1,0,0\}$; the scroll-degree and initial  condition for each node are given in Table \eqref{table:scroll_degree_config}: a); c); e); g); i); The projections of the attractors onto the plane $(x_1 ,x_2)$ of the node 1,2,3,4 and 5 respectively (the grey line represent the projection of the  trajectory of the node without connection); and b); d); f); h); j) its itinerary.}
\label{fig:Example_4a}
\end{figure}
{\bf Example 4:} For the nearly identical network described above, we assume that the  coupling strength is $c = 10$ and the inner coupling matrix is $\Gamma = diag\{1,1,1\} \in \R^{3}$. We solve numerically the nearly 
identical network \eqref{eq:nearest-identical-network} with the scroll-degree and initial condition given in Table \eqref{table:scroll_degree_config} and using a Runge-Kutta method with 10000 time iterations and step size $h=0.01$. In the first column of the Figures~\ref{fig:Example_4} we  shown the projections of the attractors onto the plane $(x_1,x_2)$ and in the right column we display its corresponding itinerary, {\it i.e.}, the time elapsed that the trajectory of each node spends in a given atom. Note that independently of wherever the initial conditions are defined, the trajectories of the nodes converge to the atoms 10, 9, 8 and 7 in a attractor similar to the node with the smallest scroll-degree. For this configuration with the inner connection $\Gamma= diag\{1,1,1\}$ the network achieves both complete and itinerary synchronization. In the next example we show a case in which complete synchronization is not achieved but the nodes are itinerary synchronized.
\begin{figure}[ht]
\centering
\begin{subfigure}[b]{0.3\textwidth}
\includegraphics[width=0.9\textwidth]{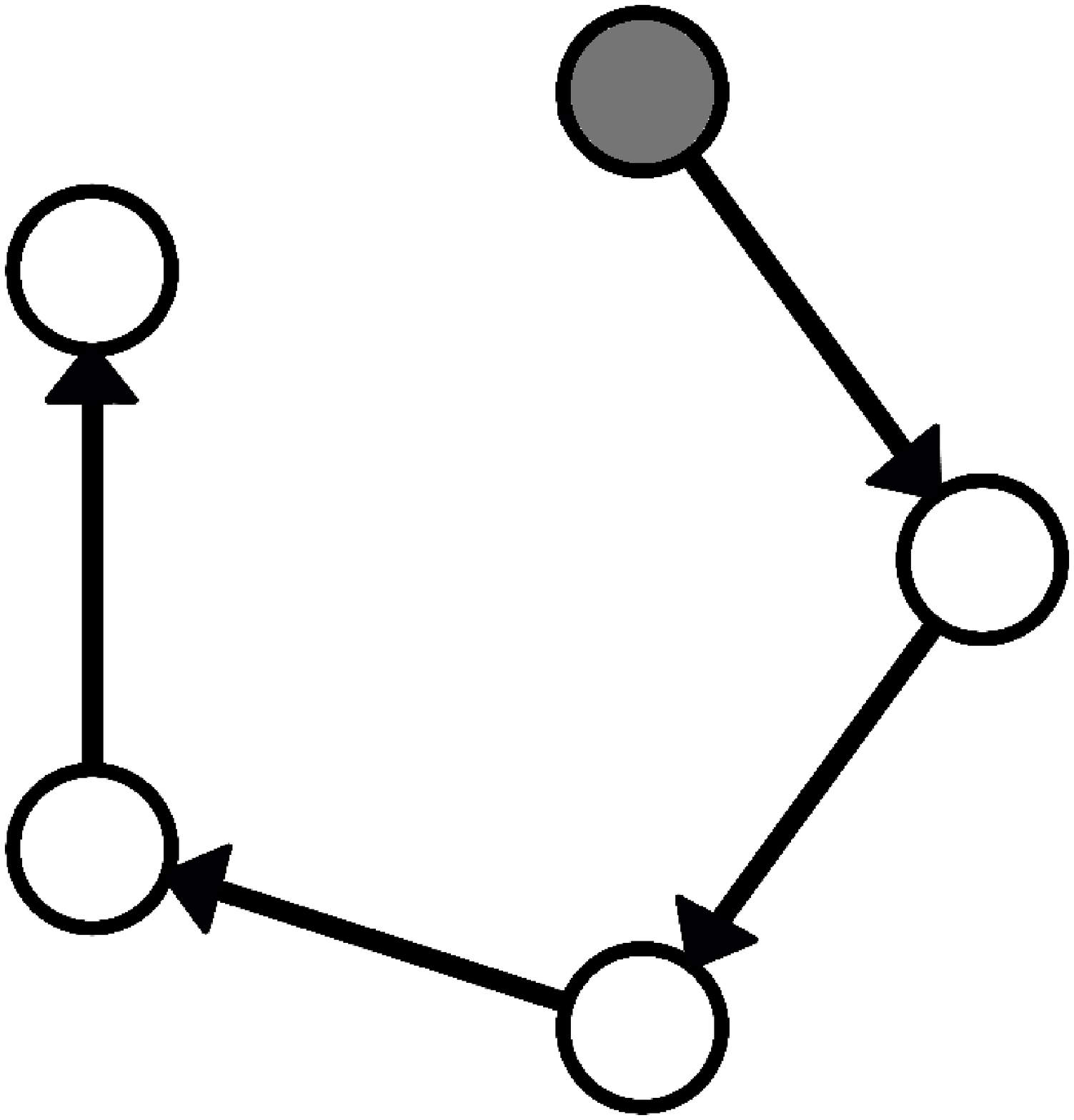}
\caption{}
\label{fig:Open_Ring}
\end{subfigure}
~
\hspace{-1pt} 
\begin{subfigure}[b]{0.5\textwidth}  
\vspace{10pt}  
\[
\Delta = \begin{pmatrix}
$0$ & $0$ & $0$ & $0$ & $0$  \\
$1$ & $0$ & $0$ & $0$ & $0$ \\
$0$ & $1$ & $0$ & $0$ & $0$ \\
$0$ & $0$ & $1$ & $0$ & $0$ \\
$0$ & $0$ & $0$ & $1$ & $0$ \\
\end{pmatrix}
\]
\caption{}
\label{fig:Adj_Open_Ring}
\end{subfigure}
\caption{A network of $N = 5$ nodes coupled in a open ring topology with directional links. (a) The network topology and (b) the coupling matrix.}
\label{fig:Open_Ring_Connection}
\end{figure}
\begin{figure}[ht]
\centering
\includegraphics[width=11.5cm,height=10cm]{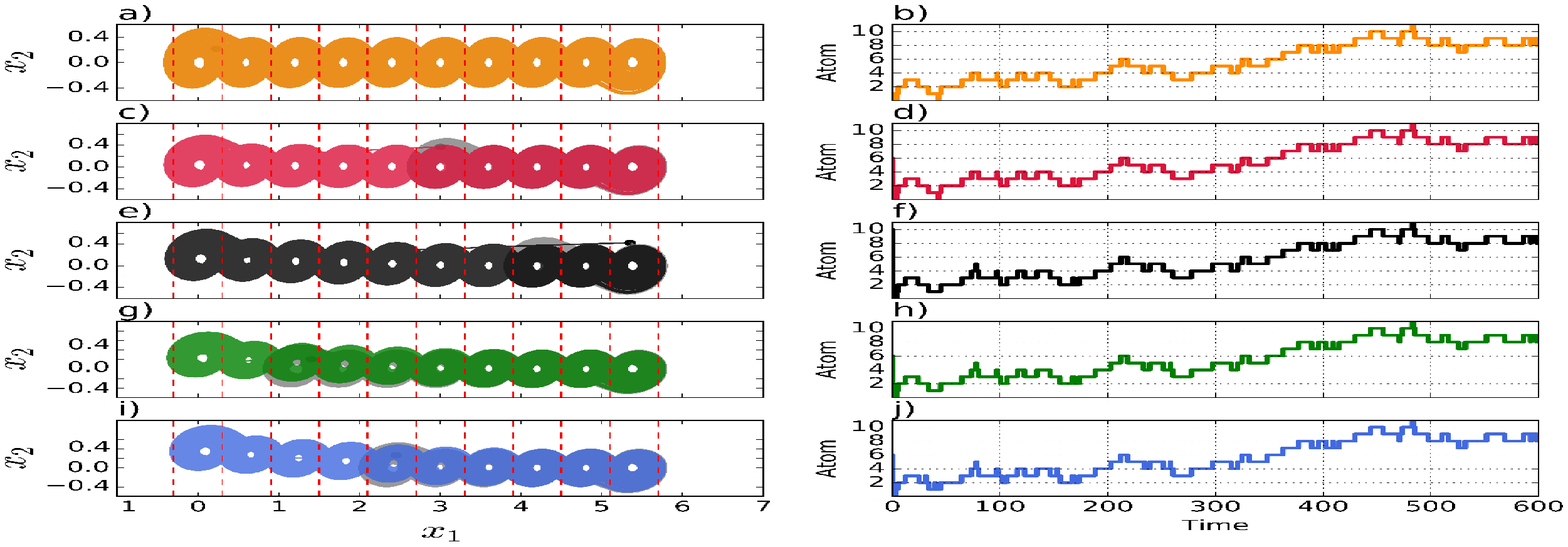}
\caption{  \small a),~c),~e),~g),~i): The projections of the attractors onto the plane $(x_1 ,x_2)$ of the nodes of a  nearly identical network \eqref{eq:nearest-identical-network} in an attacked ring topology  with coupling strength $c=10$, $\Gamma = diag\{1,1,1\}$ and where the first node has scroll-degree $\eta_{i}=10$.  b),~d),~f),~h),~j): The itinerary of each node.}
\label{fig:Example_8} 
\end{figure}

{\bf Example 5:} The dynamics of the  network composed of $N$ quasi-sym\-met\-ri\-cal $\eta$-PWL systems described above can display several behaviors depending on the inner coupling matrix $\Gamma$. The collective dynamics is affected when we suppress some variable state in the inner connection. For example, in Figure~\ref{fig:Example_4a} when we suppress two state variables  from the inner coupling matrix $\Gamma = diag\{1,0,0\}$, a deformation of the scroll attractor is achieved specially over the node with the smallest node-degree (in this case, for the node with scroll-degree 3). However the nodes still share the same itinerary. That is, we observe that complete synchronization is not achieved, however, the trajectories visit the same atoms during the same time intervals.

In the next subsection we consider the case in which this network is attacked via link deletion.

\subsection{Link attack in the ring topology}

In this subsection we present numerical results for the case in which the network is attacked by removing a preselected link. This attack transforms the network topology from a ring configuration to  a chain (open ring) configuration as we illustrate in Figure~\ref{fig:Open_Ring_Connection}; where we also shown the corresponding coupling matrix that describes this network.

After the attack, the node painting in black in Figure~\ref{fig:Open_Ring_Connection}~(a), which we call the leader node, plays the role of the master system for the rest of the nodes. The second node is the slave system for the leader node, but it is also the master system for the third node, and so on. The idea is to  explore if such a node governs or not the collective dynamics of the rest of the nodes. In this work we assume that the scroll-degree of the master node corresponds to the largest or the smallest scroll-degree. Specifically we  consider two examples: the first node has scroll-degree ten or  three.

\subsubsection{Master system with maximum scroll-degree}
{\bf Example 6:}  Figure \ref{fig:Example_8} shows the projections onto the plane $(x_1,x_2)$ of the attractors generated in each node  by the nearly identical network \eqref{eq:nearest-identical-network}  with an open ring configuration. For this example we assume that  the first node has scroll-degree $\eta_{1}=10$, and the nodes are connected with coupling strength  $c=10$ and  inner coupling matrix $\Gamma= diag\{1,1,1\}$. The node's scroll-degree and its corresponding  initial condition are given in Table \eqref{table:scroll_degree_config}. All the nodes  imitate the dynamics of the master system and change their dynamics to attain the same scroll-degree. In this context, the scroll-degree of the leader node dominates  and itinerary synchronization is achieved. 

\begin{figure}[ht]
\centering
\includegraphics[width=11.5cm,height=10cm]{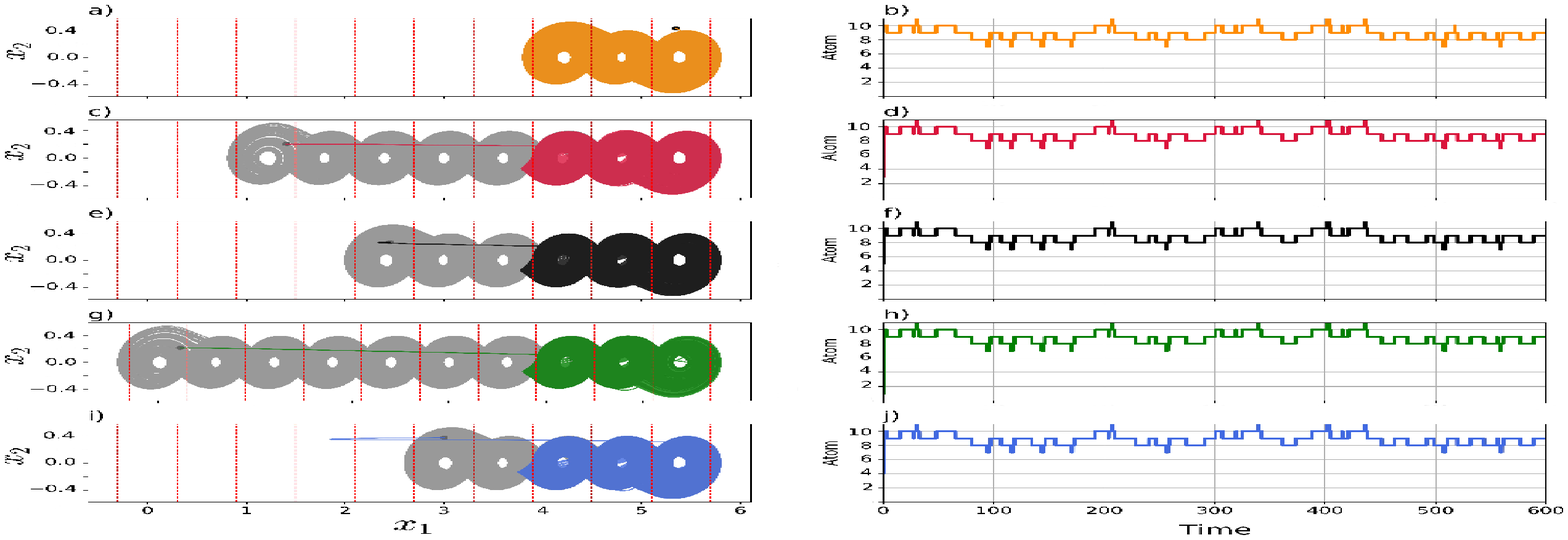}
\caption{  \small a),~c),~e),~g),~i): The projections of the attractors onto the plane $(x_1 ,x_2)$ of a  nearly identical network \eqref{eq:nearest-identical-network} in a attacked ring topology  with coupling strength $c=10$, $\Gamma = diag\{1,1,1\}$ and where the first node has scroll-degree $\eta_{i}=3$. b),~d),~f),~h),~j): The itinerary of each node.}
\label{fig:Example_9} 
\end{figure}

\subsubsection{Master system with minimum scroll-degree}
{\bf Example 7:}  Now we assume that after removing the link, the first node has scroll-degree $\eta_{1} = 3$, and the rest of the nodes have the scroll-degree and initial condition given in Table \eqref{table:scroll_degree_config}. As before, we select a coupling strength $c=10$ and $\Gamma = diag\{1,1,1\}$. In Figure \eqref{fig:Example_9}  we observe that  all the nodes reduce their scroll-degree to three \textit{i.e.} the nodes adopt the scroll-degree of the first node. Furthermore, the rest of the nodes achieve both complete and itinerary synchronization for the set of given initial conditions.


\section{Conclusions}

We investigate the collective dynamics of a network composed of PWL-system where the number of scroll-attractors in each node differs. We named such a type of system a nearly-identical network and we used the term  scroll-degree  of a node to denote the  number of scroll attractors in an individual node. Furthermore, we assumed that the network topology corresponds to a ring configuration such that the entire system can be seen as an ensemble of master-slave systems connected by directional links.

Our   numerical results show that itinerary synchronization can be achieved in this setting. Furthermore, we found that the node with the smallest scroll-degree governs the collective itinerary of the network, i.e., the dominant node in a ring configuration network is that with smallest scroll-degree. Additionally, we introduced link attacks to the network and transformed the network topology to a open ring configuration. We explored two possible scenarios after attack: the first node has the a) the largest or b) the smallest scroll-degree. In both scenarios we observed that scroll-degree of the leader node dominates. Furthermore, in the above two scenarios, itinerary synchronization is  achieved. 


\section*{Acknowledgment}
 
E. Campos-Cant\'on acknowledges CONACYT for the financial support for sabbatical year. 
M. Nicol thanks the NSF for partial support on NSF-DMS Grant 1600780.


\end{document}